\newcommand{\blind}{1}
\def\ind{\text{ind}}
\def\Var{\text{Var}}
\def\E{\mathbb{E}}
\def\cov{\text{Cov}}
\def\iid{\stackrel{iid}{\sim}}
\def\df{\mathcal{D}}
\newtheorem{theorem}{Theorem}
\newtheorem{corollary}{Corollary}
\newtheorem{condition}{Condition}
\newtheorem{prop}{Proposition}
\newtheorem{lemma}{Lemma}
\theoremstyle{definition}
\theoremstyle{remark}
\newtheorem*{remark}{Remark}
\numberwithin{equation}{section}
\newcommand{\indep}{\raisebox{0.05em}{\rotatebox[origin=c]{90}{$\models$}}}
\begin{document}

\def\spacingset#1{\renewcommand{\baselinestretch}%
{#1}\small\normalsize} \spacingset{1}
\RestyleAlgo{ruled}


\if1\blind
{
  \title{\bf Scalable and Efficient Hypothesis Testing with Random Forests}
  \author{Tim Coleman, Wei Peng, \& Lucas Mentch \vspace{5mm} \\  Department of Statistics \\ University of Pittsburgh}
  \maketitle
} \fi

\if0\blind
{
  \bigskip
  \bigskip
  \bigskip
  \begin{center}
    {\LARGE\bf Scalable and Efficient Hypothesis Testing with Random Forests}
\end{center}
  \medskip
} \fi

\bigskip
\begin{abstract}
\noindent Throughout the last decade, random forests have established themselves as among the most accurate and popular supervised learning methods.  While their black-box nature has made their mathematical analysis difficult, recent work has established important statistical properties like consistency and asymptotic normality by considering subsampling in lieu of bootstrapping.  Though such results open the door to traditional inference procedures, all formal methods suggested thus far place severe restrictions on the testing framework and their computational overhead often precludes their practical scientific use.  Here we propose a hypothesis test to formally assess feature significance, which uses permutation tests to circumvent computationally infeasible estimates of nuisance parameters.  We establish asymptotic validity of the test via exchangeability arguments and show that the test maintains high power with orders of magnitude fewer computations. Importantly, the procedure scales easily to big data settings where large training and testing sets may be employed without the need to construct additional models.  Simulations and applications to ecological data where random forests have recently shown promise are provided.  
\end{abstract}

\noindent%
{\it Keywords:} Ensemble Methods, Permutation Tests, Variable Importance, Exchangeability
\vfill

\spacingset{1} 
\section{Introduction}
\label{sec:intro}
Advances in computing power and big data collection have produced numerous situations in which complex supervised learning methods can drastically outperform more rigid classical statistical models in terms of predictive accuracy.  Despite these advances, many such models and algorithms are largely impenetrable to traditional statistical analysis.  The random forests algorithm \citep{Breiman2001} is among the relatively few supervised procedures for which formal statistical properties have recently been developed, paving the way for inference procedures. As detailed below, however, methods proposed to this point for assessing variable importance have either been \emph{ad hoc} and susceptible to producing misleading results even in simple settings or have come with severe restrictions on the testing framework while incurring extreme computational overhead.  The primary goal of this paper is to formally develop a statistically valid permutation test approach that maintains high power with orders of magnitude fewer required computations that scales naturally and efficiently to large data settings.

Permutation tests have their roots in the work of \citet{Fisher1937} using contingency tables. The canonical permutation test framework relies on an assumption of exchangeability of observations, at least asymptotically. Given iid samples $\bm{X} = X_1,..., X_n$ and $\bm{Y} = Y_1,..., Y_m$, consider the joined sample, $\bm{Z} = \bm{X} \uplus \bm{Y}$, where $\uplus$ indicates concatenation of datasets and let $\mathcal{G}$ be the group of all permutations of the indices $1,...,N$, for $N = m+n$. Let $T = r(Z_1,... Z_{N})$ be a statistic of interest and let $T_0$ denote the statistic calculated on the original data. A p-value for the hypothesis the null hypothesis  $H_0: X \stackrel{d}{=} Y$ is given by
\[
p = \frac{1}{|\mathcal{G}|}\sum_{g \in \mathcal{G}} I(|T_0| > |T(g\bm{Z})|) := 1-\hat{J}_N(T_0; \bm{Z}) + \hat{J}_N(-T_0; \bm{Z})
\] 
where $\hat{J}_N(\cdot; \bm{Z})$ is the permutation distribution function, often referred to as the conditional distribution. To achieve a test with type I error rate $\alpha$, we reject $H_0$ if $p < \alpha$. \citet{Pesarin2010} note that this p-value is conditionally unbiased, i.e. $P(p < \alpha |\bm{Z}, H_0) \leq \alpha$ and $P( p < \alpha |\bm{Z}, H_1) \geq \alpha$. However, this procedure is \textit{not} typically unbiased for more general hypotheses, such as $H_0^f: \E(f(X_1)) = \E(f(Y_1))$ for some integrable function $f(\cdot)$. As such, many permutation procedures are heuristics for hypotheses like $H_0^f$ that may provide some practical use and intuition, but without verified statistical validity. 

\subsection{Permutation Tests}

Classical work on permutation tests from \citet{Hoeffding1952} and \citet{Lehmann1949} demonstrates the convergence of the permutation distribution to the sampling distribution for a wide variety of test statistics. Much of the modern work has focused on extending permutation tests to situations where the data may not be iid or even exchangeable (e.g. \cite{Romano1990}). Studentization is typically proposed as a means of forcing the sampling distribution of a statistic to converge to a normal distribution to which it is then shown that the permutation distribution also converges. This idea has underpinned results in \citet{Neuhaus1993} and \citet{Janssen2005}, who provide various sufficient conditions for the convergence to the unconditional distribution. 

Permutation tests are exact tests for hypotheses of equal distribution under the assumption of iid sequences, but as noted above, are not necessarily valid for more general hypotheses. Convergence to the unconditional distribution ensures that the permutation distribution can be used for a finite sample exact test of equality of distribution and an asymptotically valid test for more general hypotheses. In this work, we prove results regarding the asymptotic validity of our procedure for more general hypotheses. The individual models (base learners) in supervised ensembles, such as decision trees in a random forest, naturally lend themselves to the permutation framework by being exchangeable in many practical cases. 

\subsection{Related Work on Random Forests}
Decision trees  recursively partition the covariate space and generate predictions by fitting some simple model -- often an average or majority vote -- within each resulting region. Of particular interest are the classical \textbf{C}lassification \textbf{A}nd \textbf{R}egression \textbf{T}rees \citep{Breiman1984}. CART procedures often have low bias, but can overfit the data without careful pruning. Bagging stabilizes the variance by training many individual learners on bootstrap samples.  Random forests \citep{Breiman2001} augment the bagging procedure by introducing auxiliary randomness in the construction of each individual learner, leading to trees with a lower degree of dependence but higher individual variances. Since their introduction, random forests have sustained a long track-record of empirical success in terms of predictive accuracy; see \citet{Fernandez2014} for a recent large-scale comparison in which random forests outperform nearly all competitors.   

Extensions of random forests beyond classification/regression forests have also been proposed. \citet{Meinshausen2006} used the weights learned in a regression forest to perform quantile regression.  \citet{Ishwaran2008} suggested survival forests whose consistency properties were recently studied in \citet{Cui2017}. \citet{Zhu2015} proposed constructing trees within a reinforcement learning framework. Finally, \citet{Athey2016} proposed the unifying framework of generalized random forests, which use random forests weights for general local parameter estimation.

The flexibility offered by random forests makes rigorous analysis of their statistical properties challenging. \citet{Breiman2001} proposed out of bag (oob) measures for variable importance, though a substantial amount of work since has shown that these importance measures are biased towards inflating the importance of correlated variables \citep{Tolocsi2011} or variables with many levels \citep{Strobl2007}. 
Alternative measures have been proposed, for example, in \citet{Altmann2010, Janitza2016} but do not come with statistical validity. These procedures also generally involve training many random forests and are thus often computationally intractable.  \citet{Ishwaran2019} provides confidence intervals for the standard oob measures that are valid whenever the random forest is assumed to be $L_2$ consistent. 

\citet{Wager2014a} applied the infinitesimal jackknife variance estimate developed in \citet{Efron2014} to produce closed form variance estimates for random forest predictions.  \citet{Scornet2015} provided the first consistency results for Breiman's original random forest procedure for additive regression functions.  \citet{Mentch2016} derived the closed form asymptotic distribution for random forest predictions under restrictions on subsample size. More recently, \citet{Wager2018} proved both consistency and asymptotic normality for subsampled random forests for potentially larger subsamples whenever trees are restricted to being built according to honesty and regularity conditions and large numbers of trees are constructed.

The asymptotic normality established in \citet{Mentch2016} was obtained by casting random forests as incomplete, infinite-order U-statistics. In addition to establishing normality and providing the closed form asymptotic variance, the authors also lay out a formal hypothesis testing procedure for evaluating variable importance.  This test, though valid, is quite computationally prohibitive.  The hypotheses are presumed to be evaluated at predefined test locations in some test set $\mathcal{T}$ and whenever $|\mathcal{T}|=N_t >1$, calculating the test statistic involves estimating an $N_t \times N_t$ covariance matrix.  Accurate estimation of the covariance necessitates constructing a very large number of trees and becomes computationally infeasible for more than 20-30 test points.  \citet{Mentch2017} extends the procedure to tests for additivity and provide an alternative approximate test involving random projections that allows the procedure to scale up slightly but with additional computational overhead.  Even employing the potentially more efficient infinitesimal jackknife variance estimate utilized in \citet{Wager2014a}  and \citet{Wager2018} requires the number of trees constructed be at least on the order of $n$ to be valid.  

In contrast, the method we propose here is almost entirely computationally immune to the number of test points. The permutation scheme we employ avoids the need for an explicit covariance estimation and thus does not require a larger number of trees for larger datasets.  Instead, our hypothesis tests provide valid p-values for variable importance while maintaining the same order of computational complexity as the original random forest procedure.  Put simply, if the size and structure of the available data allows for a random forest model to be constructed, our testing procedure can be readily employed.  


The remainder of this paper is laid out as follows. In \autoref{sec:meth}, we give an overview of the testing procedure, and further highlight its benefits over existing methods.  In \autoref{sec:theory}, we present results regarding the statistical properties of the proposed test, namely that it attains validity for the desired hypotheses. In \autoref{sec:sims}, we present simulation studies of the testing procedure for a variety of underlying regression functions. In \autoref{sec:models}, we apply our procedure to multiple ecological datasets where random forests have been successfully employed in applied work. In addition to the main text,  all technical proofs are provided in \autoref{appdx:proofs}, and additional simulations demonstrating the robustness of the proposed procedure are presented in \autoref{appdx:more_sims}.

\section{Overview of the Testing Procedure}
\label{sec:meth}

Consider a sample $\df_n = \{Z_1, Z_2, ..., Z_n\}$, with $Z_i = (\bm{X}_i, Y_i)$ consisting of observations on covariates $\bm{X} = (X_1, ..., X_p) \in \mathcal{X}$ and a response $Y \in \mathcal{Y}$. In this work, it is assumed that $Z_k \stackrel{iid}{\sim} F$ where $F$ is some distribution with support on $\mathcal{X} \times \mathcal{Y}$.  In the regression context, we assume that $Y = m(\bm{x}) + \epsilon$ where $m(\bm{x}) = \E(Y|\bm{X} = \bm{x})$ and $\epsilon$ is an independent noise process, typically with $\E(\epsilon) = 0$ and $\Var(\epsilon) < \infty$. 
The goal of the random forest procedure is to accurately estimate $m(\bm{x})$. Each tree in a random forest is constructed by drawing subsamples of size $k_n < n$, from $\df_n$, drawing a randomization parameter $\xi$ from some distribution $\Xi$, and constructing the randomized decision tree. This process is repeated $B$ times and the random forest prediction at some $\bm{x} \in \mathcal{X}$ is given by
\begin{equation} \label{eqn:randfor}
RF_{B,k_n}(\bm{x}) = \frac{1}{B}\sum_{j=1}^B T_{j,k_n}(\bm{x}; \xi_j; \df_n).
\end{equation}
We can similarly evaluate the RF prediction accuracy at a single fixed test location $\bm{x}$ and with true response value $y$ via its mean squared error 
\[
 MSE_{RF}(\bm{x} ; y, \df_n) =  \bigg(\frac{1}{B} \sum_{j=1}^{B} T_{k_n,j}(\bm{x}) - y\bigg)^2 .
\]
For sufficiently large $B$, \autoref{eqn:randfor} can be made arbitrarily close to $\E_{\Xi}(RF_{B,k_n}(\bm{x})|\df_n) $, where the expectation is taken over the subsampling and randomization distributions, conditional on the data. \autoref{eqn:randfor} can be interpreted as a Monte Carlo approximation to the infinite forest. The effect of this approximation is studied in \citet{Scornet2016}. In what follows, we use the shorthand $X_n \stackrel{d}{\to} \mathcal{N}(\mu, \sigma^2)$ to mean that $X_n \stackrel{d}{\to} Z$ where $ Z \sim \mathcal{N}(\mu, \sigma^2)$. 

Similarly, we can calculate the MSE of a forest at a collection of test points $\mathcal{T} = [(\bm{x}_1, y_1),..., (\bm{x}_{N_t}, y_{N_t})]$ as $MSE_{RF}(\mathcal{T}) = \frac{1}{N_t} \sum_{\ell = 1}^{N_t} MSE_{RF}(\bm{x}_\ell; Y_\ell, \mathcal{D}_n)$. Let $RF^\pi$ be defined similarly to \autoref{eqn:randfor}, but with $\df_n$ replaced by $\df_n^\pi$, where $\df_n^\pi$ replaces some subset of features with an alternate copy drawn independent of $Y$ given the rest of the covariates. To make this concrete, suppose that this subset consists of just a single feature $X_j$.  We can then evaluate whether $X_j$ is important by conducting a test of the following hypotheses
\begin{equation} \label{eqn:hypoMSE}
\begin{split}
H^j_0: \ & \E(MSE_{RF}(\mathcal{T})) = \E(MSE_{RF^\pi}(\mathcal{T})) \\
H^j_1: \ & \E(MSE_{RF}(\mathcal{T})) < \E(MSE_{RF^\pi}(\mathcal{T}))
\end{split}
\end{equation}
where the expectation is taken over the training data and auxillary randomness, but is conditional on $\mathcal{T}$. We call $X_j$ important if we are able to reject $H^j_0$. This definition of importance is model based and therefore less general than alternative definitions such as that utilized in the recent \emph{knockoff} literature \citep{Barber2015,Candes2016}, where a variable $X_j$ is deemed unimportant if
\[
(Y\ \indep \ X_j) \ | \ \bm{X}_{-j} .
\]
It should be noted that the above is neither necessary nor sufficient for $H_0^j$. However, in practice, the test statistic utilized in the knockoff procedure is generally taken as the difference in importance measures between original and knockoff variables and thus the outcome of the procedure itself remains highly model dependent. We also note that our procedure, while it could use knockoffs, does not require knowledge of the distribution of the covariates.

\subsection{Testing Procedure}
In general, if two randomized ensemble methods produce predictions that are similarly accurate, then the permutation distribution of discrepancies in accuracy should be centered around 0. In our particular setting for testing feature significance, we compare the accuracy of two random forests built on different data.  For a given (original) dataset $\df_n$, we first construct $\df_n^\pi$ in such a way so as to remove any dependence of response on these features. However, rather than permuting the data and retraining entire random forests, we first train trees on both $\df_n$ and $\df_n^\pi$ separately, record predictions at the test locations, and then permute the predictions between the forests. The new forests formed at each iteration thus consist of some trees built on the original data and some built with the permuted counterpart so that on average.  In this light, the testing procedure can be seen as directly analogous to a classic permutation test to evaluate equality in distribution across two groups.  This procedure requires only $2B$ trees, regardless of the size of the test set.


Pseudo-code for the permutation test is provided in Algorithm 1. We use $\oplus$ to denote concatenation of data matrices by column, $\uplus$ to denote concatenation by row, and $\ominus$ to denote the removal of columns from a dataset.  In order to prevent p-values exactly equal to 0, we add 1 to the numerator and denominator, ensuring that under $H_0$ the p-values are stochastically larger than uniform random variables. This suffices to make the testing procedure slightly more conservative, but more amenable to potential p-value transforming procedures; see, for example, \citet{Phipson2010} for a more thorough discussion.  Crucially, note that this procedure requires no explicit variance estimation of the $N_t$ predictions made by individual forests, thereby providing a dramatic computational speed-up over existing parametric approaches \citep{Mentch2016,Mentch2017} that require the estimation of a $N_t \times N_t$ covariance matrix. 

\begin{algorithm}[h]
\begingroup\footnotesize
 \KwData{Training data $\df_n$ test sample ($\mathcal{T} = [(\bm{x}_1, y_1),..., (\bm{x}_{N_t}, y_{N_t})]$), specified feature(s) of interest, $\bm{X}_S$, $N_0$ number of permutations to evaluate}
 \KwResult{p-value, $\tilde{p}$ for importance  of $\bm{X}_S$ at points in $\mathcal{T}_n$}
 \textsc{set} number of permutations $n_{perm}$, subsample size $k_n$, and $n_{tree} = B$ \;
 \textsc{define} $\bm{X}_S^\pi$ by permuting the rows of $\df_n$ and selecting the columns corresponding to $\bm{X}_S$ \;
 \textsc{define} $\df_n^\pi = \df_n \ominus \bm{X}_S \oplus \bm{X}_S^\pi $\;
 \For{$i$ in $\{1, ..., B\}$}{
 \textsc{sample} $k_n$ rows from  $\df_{n}$: $\df^{*}_i = \{Z_{i,1}^{*}, ..., Z_{i, k_n}^{*}\}$\;
 \textsc{sample} $k_n$ rows from  $\df^\pi_{n}$: $\df^{*\pi}_i = \{Z_{i,1}^{*\pi}, ..., Z_{i, k_n}^{*\pi}\}$\;
 \textsc{train} trees $T_i(\cdot)$ on $\df^{*}_{i,k_n}$ and $T^\pi_i(\cdot)$ on $\df^{*\pi}_{i, k_n}$\;
 \textsc{predict} at $\mathcal{T}_n$ using $T_i, T^\pi_i$, generating $\bm{T}_i = [T_i(\bm{x}_1), ..., T_i(\bm{x}_{N_t})]$ and $\bm{T}^\pi_i = [T^\pi_i(\bm{x}_1), ..., T^\pi_i(\bm{x}_{N_t})]$
 }
 \textsc{calculate} $MSE_0 = \frac{1}{N_t}\big|\big| \frac{1}{B} \sum_{i=1}^B \bm{T}_i - \bm{y}\big|\big|^2_2$  and $MSE^\pi_0 =  \frac{1}{N_t}\big|\big| \frac{1}{B} \sum_{i=1}^B \bm{T}^\pi_i - \bm{y}\big|\big|^2_2$\;
 \For{$j$ in $\{1, ..., N_0\}$}{
 \textsc{sample} $\bm{T}^*_{j,1}, ..., \bm{T}^*_{j,B}$ from $\{\bm{T}_{1},... \bm{T}_B, \bm{T}^\pi_1, ..., \bm{T}_B^\pi\}$ without replacement, call the $B$ remaining trees $\bm{T}^{*\pi}_{j,1}, ..., \bm{T}^{*\pi}_{j,B}$\;
 \textsc{calculate} $MSE^*_j =  \frac{1}{N_t}\big|\big| \frac{1}{B} \sum_{l=1}^B \bm{T}^*_{j,l} - \bm{y}\big|\big|_2^2$ and  $MSE^{*\pi}_j =  \frac{1}{N_t}\big|\big| \frac{1}{B} \sum_{l=1}^B \bm{T}^{*\pi}_{j,l} - \bm{y}\big|\big|_2^2$
 }
 \textsc{calculate} $\tilde{p} = \frac{1}{N_0 +1}\bigg[1 + \sum_{j=1}^{N_0} I\big((MSE^\pi_0 - MSE_0) \leq (MSE^{*\pi}_j - MSE^{*}_j)\big)\bigg]$
 \caption{Permutation test pseudocode for variable importance} \label{alg:alg1}
 \endgroup
\end{algorithm}


\section{Theoretical Justification} \label{sec:theory}

We now develop the theoretical backing for the hypothesis testing procedure outlined above. In \autoref{subsec:exchange}, we make explicit the connection between bagged-models and exchangable random variables. Then, in \autoref{subsec:trees}, we use these results establish asymptotic normality for subsampled random forest predictions, under mild conditions. Next, in \autoref{subsec:mse}, we extend these results to the fixed-test set MSE, establishing a CLT for this quantity. To use these distributions directly is unwieldy - there is no obvious consistent estimator of the variance parameters available. Thus, in \autoref{subsec:permtests}, we prove that the proposed permutation test is asymptotically equivalent to the computationally infeasible parametric test, building on recent arguments from \citet{Chung2013}.  For readability, most proofs are reserved for \autoref{appdx:proofs}.

\subsection{Exchangeable Random Variables \& Permutation Tests} \label{subsec:exchange}

Recall that a sequence of random variables $X_1, X_2, ...$ is exchangeable if $(X_{i_1}, X_{i_2}, ...., X_{i_k}) \stackrel{d}{=} (X_{\pi(1)}, X_{\pi(2)}, ..., X_{\pi(k)})$ for every finite sub-collection indexed by $i_1, ..., i_k$ and every permutation of the indices $\pi(\cdot)$, see \citet{Aldous1985} for a thorough review. 


Permutation tests naturally lend themselves to exchangeable data by providing a means of evaluating the hypothesis that the joint distribution of a collection of random variables is invariant under permutations. They maintain exactness for the null hypothesis whenever $X_i \iid P$ and independently $Y_j \iid Q$ because the joint measure of the data factorizes as
\[
\mu(\bm{X}, \bm{Y}) =  \prod_{i=1}^n P(X_i) \prod_{j=1}^m Q(Y_j) 
\]
which is invariant to permutations of observations if and only if  $P = Q$.

Modern work for permutation tests has focused largely on modifications needed to account for violations of the exchangeability assumption. \citet{Chung2013} propose a studentization of the permutation test statistic when conducting inference a functional of two distributions. Consider, for example a two sample problem, with $X_1, ..., X_n \stackrel{iid}{\sim}P_X = \mathcal{N}(0, 5)$  and independently let $Y_1, ..., Y_m \stackrel{iid}{\sim}P_Y = \mathcal{N}(0,1)$. Clearly, $\text{median}(P_X) = \text{median}(P_Y)$, but the data are no longer exchangeable and so an unstudentized permutation test of $H_0:\text{median}(P_X) = \text{median}(P_Y)$ is no longer valid at a pre-specified level (see \citet{Chung2013} for details). However, note that exchangeability is violated only because the data are no longer identically distributed; permutation tests can remain valid for data that are correlated but identically distributed so long as the pairwise dependence is constant. \\

\vspace{-4mm}
\begin{theorem} \label{theorem:exch}
Denote a sequence of (potentially randomized) subsampled trees as $\{T_k(\cdot)\}_1^\infty$. Under the conditions outlined above, the residuals at $\bm{Z}^* = (\bm{X}^*, Y^*) \sim F$ given by 
\[
r_k = T_k(\bm{X}^*) - Y^*
\]
form an infinitely exchangeable sequence of random variables.
\end{theorem}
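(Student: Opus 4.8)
The plan is to reduce the claim to the elementary fact that a sequence which is conditionally i.i.d.\ given some $\sigma$-field is exchangeable. The key observation is that each base learner $T_k$ depends on the data only through the fixed training set $\df_n$ together with a ``tree-specific'' source of randomness $\omega_k := (\xi_k, s_k)$, where $\xi_k \sim \Xi$ is the splitting/randomization parameter and $s_k$ encodes the size-$k_n$ subsample drawn from $\df_n$ for the $k$-th tree. Under the construction described in \autoref{sec:meth}, the pairs $\omega_1, \omega_2, \ldots$ are i.i.d.\ and independent of $\df_n$, and the fresh test point $\bm{Z}^* = (\bm{X}^*, Y^*) \sim F$ is drawn independently of $(\df_n, \{\omega_k\}_{k \ge 1})$. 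Hence there is a fixed measurable map $g$ with
\[
r_k \;=\; T_k(\bm{X}^*) - Y^* \;=\; g\!\left(\omega_k;\, \df_n,\, \bm{Z}^*\right), \qquad k = 1, 2, \ldots
\]

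First I would condition on the $\sigma$-field $\mathcal{C} := \sigma(\df_n, \bm{Z}^*)$. Given $\mathcal{C}$, the map $\omega \mapsto g(\omega;\, \df_n,\, \bm{Z}^*)$ is deterministic, and, since $\{\omega_k\}_{k\ge 1}$ is independent of $\mathcal{C}$, the $\omega_k$ remain i.i.d.\ after conditioning; therefore $r_1, r_2, \ldots$ are conditionally i.i.d.\ given $\mathcal{C}$. In particular, for any finite index set and any permutation $\pi$ of it, the conditional law of the permuted block equals that of the original block $\mathcal{C}$-almost surely, and integrating out $\mathcal{C}$ preserves this equality of the (now unconditional) laws. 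Since the argument applies to every finite sub-collection, the sequence $\{r_k\}_{k \ge 1}$ is infinitely exchangeable; equivalently, $\{r_k\}$ is a mixture over $\mathcal{C}$ of i.i.d.\ sequences, hence exchangeable (cf.\ \citet{Aldous1985}).

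Two points warrant care rather than constituting a genuine obstacle. First, one must confirm that the independence structure really holds as stated --- in particular that subsampling (and the draw of $\xi_k$) is carried out independently across trees, so that the $\omega_k$ are i.i.d.\ and not merely identically distributed; this is exactly the subsampling-with-auxiliary-randomness construction underlying \autoref{eqn:randfor}. Second, the test point $\bm{Z}^*$ must be external to the training sample so that $\bm{Z}^* \indep \df_n$, which is why the result is phrased for a freshly drawn $\bm{Z}^* \sim F$ rather than for an in-sample point. The step I would be most careful about is verifying that conditioning on $\mathcal{C}$ does not disturb the joint law of $(\omega_1, \omega_2, \ldots)$ --- this is immediate from the independence of $\{\omega_k\}_{k\ge 1}$ and $\mathcal{C}$ --- after which the conditional-i.i.d.\ reduction gives the conclusion directly.
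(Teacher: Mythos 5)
Your proposal is correct and follows essentially the same route as the paper: you write each residual as a fixed function of i.i.d.\ tree-specific randomness $\omega_k=(\xi_k,s_k)$ together with $(\df_n,\bm{Z}^*)$, observe the sequence is conditionally i.i.d.\ given $\sigma(\df_n,\bm{Z}^*)$, and conclude exchangeability from the fact that a mixture of i.i.d.\ sequences is exchangeable --- exactly the ``directed mixture / converse of de Finetti'' argument in the paper's proof. No gaps.
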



In the case of a single random forest, exchangeability is readily apparent as the order in which trees are trained has no bearing on their structure. Indeed, \autoref{theorem:exch} can be extended to any bagged learning method. 

The primary goal of this work is to identify covariates that produce statistically significant improvements in model accuracy. To assess this, we consider building two forests, one on the original dataset $\df_n$ and another on a second dataset $\df^\pi_n$ wherein the covariate(s) of interest $\bm{X}_S$ are rendered independent of $Y$, conditional on the rest of the features.  This muting can be achieved in various ways:
\begin{itemize}
\item Outright exclusion:  $\bm{X}_S$ is simply removed from the second training dataset.
\item Random permutation:  Each covariate in $\bm{X}_S$ is randomly shuffled so that $\bm{X}_S$ is replaced by some permuted alternative $\bm{X}_{S}^{\pi}$ in the second training dataset.
\item Knockoffs:  Each covariate in $X_i$ in $\bm{X}_S$ is replaced by some knockoff alternative $X_{i}^{\pi}$ sampled from the distribution of $X_i | \bm{X}_{-i}$ so that $\bm{X}_S$ is replaced by a randomized alternative $\bm{X}_{S}^{\pi}$ in the second training dataset. See \citet{Candes2016} for details.
\end{itemize}


Given two samples $\df_n^1$ and $\df_n^2$ drawn independently from the same population and a collection of subsampled trees, say $\bm{T}_1$ and $\bm{T}_2$, trained on each, $\bm{T}_1 \stackrel{d}{=} \bm{T}_2$, but the trees are no longer exchangeable because the conditioning random vector (i.e.\ the training data) is different.  In general, the within sample dependence between trees will be higher than the between sample correlation. In practice, the trees are \textit{approximately} exchangeable - higher within sample dependence tends to die out when the subsample size grows slower than $n$. More specifically, provided the tree distributions converge weakly to some distribution, we can establish the notion of \textit{asymptotic exchangeability}. This idea is well studied, if rarely explicitly mentioned, with necessary conditions provided, for example, in \citet{Romano1990, Chung2013,Good2002}. Recall also that the original motivation for inserting the additional randomness in random forests was to reduce between-tree correlation, further dampening the effect. 

In practical terms, this implies that replacing the covariates under investigation with a randomized counterpart -- either a permutation or knock-off -- leads to a procedure with more desirable properties than when those covariates are simply dropped, creating a lower-dimensional covariate space. In particular, to attain near-exchangeability under the null hypothesis, the individual trees should be constructed in a near-identical fashion by building trees on data with the same dimension feature space, on the same subsample size, and with other various tree-specific parameters controlled.

Given a dataset $\df_n$ with $n\times p $ design matrix $\bm{X}$, let $\mathcal{S} \subset \{1,...,p\}$ and define $\bm{X}_\mathcal{S} = \{ X_j : \ j \in \mathcal{S}\}$ and $\bm{X}_{-\mathcal{S}} = \{X_j : \ j \notin \mathcal{S}\}$ where we take $\bm{X}_\mathcal{S}$ to be the covariates of interest. We then create a randomized version of $\bm{X}_\mathcal{S}$ independent of $Y$, denoted by $\bm{X}_\mathcal{S}^{\pi}$. Note in particular that when the entire joint density $P(\bm{X})$ of the covariates is known, Algorithm 1 of \citet{Candes2016} can be used to generate the knockoffs that make up $\bm{X}_\mathcal{S}^{\pi}$ which then ensures that $[\bm{X}_{-\mathcal{S}}, \bm{X}_\mathcal{S}] \stackrel{d}{=}[\bm{X}_{-\mathcal{S}}, \bm{X}_\mathcal{S}^{\pi}] $. By construction, $\bm{X}^\pi_\mathcal{S} \ \indep \ Y | \bm{X}_{-\mathcal{S}}$ and consequently, if we now replace $\bm{X}_\mathcal{S}$ with $\bm{X}_\mathcal{S}^{\pi}$ in the design matrix to form a new training dataset $\df^\pi_n$, then the trees trained on $\df^\pi_n$ inherit the conditional independence so that $T(\bm{x} ; \df^\pi_n) \  \indep \ Y | \ \bm{X}_\mathcal{-S}$. 

Assuming $\bm{X}_\mathcal{-S}$ are the only predictively important covariates, we would expect predictions from trees trained on $\df_n$ to have the same distribution as those generated from trees trained on $\df^\pi_n$. As such, the trees should be asymptotically exchangeable between forests and it follows that we can test this exchangeability assumption via a permutation test.




\subsection{Asymptotic Behavior of Trees} \label{subsec:trees}
Within-forest exchangeability is not sufficient to justify the proposed testing procedure at the nominal level. Instead, we need to establish sufficient conditions to justify exchanging trees between forests. An important step in this direction is to establish the existence of a limiting sequence of subsampled trees that behave like an iid sequence. \\

\vspace{-4mm}
\begin{condition} \label{cond:cond1}
There exists a random function $T_\infty$ such that $\lim_{n\to\infty} T_{k_n} \stackrel{d}{=} T_\infty$ 
\end{condition}

In \autoref{subsec:treespec} we provide sufficient conditions for this to be satisfied. Note that this condition is similar in spirit to Assumption 15.7.1 in \citet{Lehmann2006}, which is fundamental to the validity of subsampling based intervals for model parameters. 


In practice, we would like to establish results for random forests trained on growing subsamples. If we insist that the subsample size $k_n$ grow slower than $\sqrt{n}$, we obtain the following intuitive result. \\

\vspace{-3mm}
\begin{lemma} \label{lem:lemma1}
Consider a collection of $B_n$ trees  built from a training dataset of size $n$ on subsamples of size $k_n$, say $\{T_{j, k_n}\}_{j=1}^{B_n}$, satisfying \autoref{cond:cond1}. Then, as long as $k_n/\sqrt{n} \to 0$ and
\[
 \binom{B_n}{2} \log\bigg[\frac{\binom{n- k_n}{k_n}}{\binom{n}{k_n}}\bigg] \to 0
\]
the infinite sample sequence of trees, $\{T_{1,\infty, k_\infty}, ..., T_{B,\infty, k_\infty},...\}$, is an infinite sequence of pairwise independent random functions.
\end{lemma}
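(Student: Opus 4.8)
The plan is to reduce the lemma to a single pairwise statement and then invoke \autoref{cond:cond1}. Fix any two indices $i\neq j$; I will show that the joint law of $(T_{i,k_n},T_{j,k_n})$ converges weakly to the product measure $\mathcal{L}(T_\infty)\otimes\mathcal{L}(T_\infty)$, which is exactly pairwise independence of the limiting sequence once the marginals are identified via \autoref{cond:cond1}. The engine of the argument is that two subsampled trees built from \emph{disjoint} index sets are deterministic functions of two disjoint blocks of the iid sample $Z_1,\dots,Z_n$ together with independent copies of the auxiliary parameter $\xi$, and are therefore exactly independent; overlap of the two index sets is the only mechanism creating dependence, and it becomes negligible.

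First I would control the overlap. If $S_i,S_j\subset\{1,\dots,n\}$ are the two size-$k_n$ subsample index sets, drawn uniformly and independently, then $P(S_i\cap S_j=\emptyset)=\binom{n-k_n}{k_n}/\binom{n}{k_n}=:p_n$, and since $-\log p\ge 1-p$ for $p\in(0,1]$, a union bound over the $\binom{B_n}{2}$ pairs yields
\[
P\big(\exists\, i<j\le B_n:\ S_i\cap S_j\neq\emptyset\big)\ \le\ \binom{B_n}{2}(1-p_n)\ \le\ -\binom{B_n}{2}\log\!\left[\frac{\binom{n-k_n}{k_n}}{\binom{n}{k_n}}\right]\ \longrightarrow\ 0
\]
by hypothesis; write $A_n$ for the event that all $B_n$ subsamples are mutually disjoint, so $P(A_n)\to1$. (The side condition $k_n/\sqrt n\to0$ is in fact implied here, since $\binom{B_n}{2}\ge1$ forces $-\log p_n\to0$ and an elementary expansion of the binomial ratio gives $-\log p_n\asymp k_n^2/n$; I would record this to make clear that the two hypotheses are consistent.)

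Next I would set up the coupling. On a common probability space, let $\hat T_1,\dots,\hat T_{B_n}$ be trees grown on $B_n$ \emph{independent} samples of size $k_n$ drawn directly from $F$, with independent draws of $\xi$; marginally $\hat T_m\stackrel{d}{=}T_{k_n}$ and jointly the $\hat T_m$ are iid, so their joint law is the $B_n$-fold product of $\mathcal{L}(T_{k_n})$. Because $Z_1,\dots,Z_n\iid F$ and the subsampling mechanism treats indices exchangeably, conditionally on $A_n$ the blocks $\{Z_k:k\in S_m\}$ can be realized as exactly those $B_n$ fresh independent samples; hence one may couple so that $(T_{1,k_n},\dots,T_{B_n,k_n})=(\hat T_1,\dots,\hat T_{B_n})$ on $A_n$. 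Therefore the total variation distance between $\mathcal{L}(T_{1,k_n},\dots,T_{B_n,k_n})$ and $\mathcal{L}(T_{k_n})^{\otimes B_n}$ is at most $P(A_n^{c})\to0$. Projecting onto the pair $(i,j)$ and combining with the marginal convergence $T_{k_n}\stackrel{d}{\to}T_\infty$ (so $\mathcal{L}(T_{k_n})^{\otimes 2}$ converges weakly to $\mathcal{L}(T_\infty)^{\otimes 2}$) gives $\mathcal{L}(T_{i,k_n},T_{j,k_n})\stackrel{d}{\to}\mathcal{L}(T_\infty)\otimes\mathcal{L}(T_\infty)$. As $i,j$ were arbitrary, the limiting sequence $\{T_{1,\infty,k_\infty},\dots\}$ is pairwise independent; the same display in fact shows every finite sub-collection converges to an iid one, so mutual independence holds, although only the pairwise version is needed downstream.

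The step I expect to be delicate is the coupling claim: one must argue carefully that, conditionally on the disjointness event $A_n$, the data blocks indexed by $S_1,\dots,S_{B_n}$ are distributed exactly as $B_n$ independent size-$k_n$ samples from $F$. This rests on the $Z_k$ being iid together with the fact that the law of $(S_1,\dots,S_{B_n})$ restricted to pairwise-disjoint configurations is invariant under permuting the pooled collection of sampled indices, so that the corresponding data blocks form an exchangeable---hence, being drawn from an iid sequence, independent and identically $F$-distributed---family. The remaining points are bookkeeping: interpreting weak convergence and total variation in the function space in which the $T_{k_n}$ live (the total variation bound above is dimension-free, so this is harmless), and the elementary estimate $-\log p_n\asymp k_n^2/n$ used to reconcile the two hypotheses.
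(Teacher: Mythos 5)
Your proposal is correct and rests on the same central observation as the paper's proof: dependence between subsampled trees arises only through overlap of the subsample index sets, and the probability that any two of the $B_n$ index sets intersect vanishes under the stated hypotheses. The packaging differs in two ways that are worth noting. First, to pass from a single pair to all $\binom{B_n}{2}$ pairs, the paper factorizes $P\bigl(\bigcap_{j\neq l}\{S_j\cap S_l=\emptyset\}\bigr)$ as $\prod_{j\neq l}P(S_j\cap S_l=\emptyset)$ on the grounds that the subsamples are drawn independently; this factorization is not actually justified, since the pairwise-disjointness events share common subsamples and are therefore not independent. Your union bound $P(\exists\, i<j:\ S_i\cap S_j\neq\emptyset)\le\binom{B_n}{2}(1-p_n)\le-\binom{B_n}{2}\log p_n$ reaches the same conclusion without that step and is the cleaner argument. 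Second, where the paper simply asserts that it "is sufficient to show that asymptotically the trees are trained using independent training samples," you make this precise via an explicit coupling with trees grown on fresh independent samples and a total-variation bound of $P(A_n^c)$, which also delivers the stronger statement that every finite sub-collection converges to an iid family (anticipating what the paper only obtains later by combining \autoref{lem:lemma1} with exchangeability in \autoref{cor:corl1}). Your side remark that the displayed condition already forces $-\log p_n\asymp k_n^2/n\to0$, so that $k_n/\sqrt{n}\to0$ is essentially implied, is consistent with the expansion carried out in the paper's proof. The one point to state carefully, as you acknowledge, is that conditioning on the disjointness event $A_n$ (a function of the index sets alone, drawn independently of the data) leaves the law of the data unchanged, so that disjoint index blocks of an iid sequence are indeed independent size-$k_n$ samples from $F$.
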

 
The condition on the number of trees $B_n$ is likely not of much practical importance. For finite $B_n$, the probability sequence has the form of $a_n^K$, so because $a_n \to 1$, $a_n^K$ also converges to 1. However, if we let $B_n$ grow with $n$, the number of trees may overwhelm the independence induced by subsampling. Thus, we must let the log probability of an individual pair being independent go to 0 faster than $\binom{B_n}{2} \approx B_n^2/2$ goes to infinity.

\autoref{lem:lemma1} establishes asymptotic pairwise independence, but not that the limiting sequence is iid. For this, we turn to a result from \citet{Aldous1985}. \\

\vspace{-4mm}
\begin{lemma} \label{lem:lemma2}
\citep{Aldous1985} Let $Z_1, Z_2, ...$ be an infinitely exchangeable sequence. If $Z_i\ \indep\ Z_j, i \neq j$, then $Z_1, Z_2, ...$ is a sequence of iid random variables.
\end{lemma}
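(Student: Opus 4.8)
The plan is to obtain the conclusion from de Finetti's representation theorem. Because $Z_1, Z_2, \ldots$ is infinitely exchangeable and, in the setting of interest, takes values in $\mathbb{R}$ (a standard Borel space; e.g.\ the residuals $r_k$ of \autoref{theorem:exch}), de Finetti's theorem supplies a random probability measure $\mu$ on the state space such that, conditionally on $\mu$ (equivalently, on the exchangeable $\sigma$-field generated by the sequence), the $Z_i$ are iid with common law $\mu$. Exchangeability by itself already makes the $Z_i$ identically distributed, with marginal law $\nu := \E[\mu(\cdot)] = \mathcal{L}(Z_1)$, so the entire task reduces to showing that the pairwise-independence hypothesis forces $\mu$ to be \emph{degenerate}, that is $\mu = \nu$ almost surely; once that is established the conditioning in de Finetti's statement is vacuous and the $Z_i$ are unconditionally iid.

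\textbf{Main computation.} First I would fix a bounded measurable function $f$ and abbreviate $\mu f := \int f\, d\mu$. Conditioning on $\mu$ and using the conditional iid structure gives
\[
\E\big[f(Z_1)\,f(Z_2)\big] = \E\big[\,\E[f(Z_1)\mid\mu]\,\E[f(Z_2)\mid\mu]\,\big] = \E\big[(\mu f)^2\big],
\]
whereas the hypothesis $Z_1 \indep Z_2$ together with $Z_1 \stackrel{d}{=} Z_2$ gives
\[
\E\big[f(Z_1)\,f(Z_2)\big] = \E[f(Z_1)]\,\E[f(Z_2)] = \big(\E[\mu f]\big)^2 .
\]
Subtracting, $\Var(\mu f) = 0$, so $\mu f = \E[f(Z_1)]$ almost surely. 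Taking $f = \mathbf{1}_{(-\infty, q]}$ for each $q \in \mathbb{Q}$, and working outside the union of these countably many exceptional null sets, the random distribution function $q \mapsto \mu\big((-\infty, q]\big)$ agrees on $\mathbb{Q}$ with the deterministic function $q \mapsto P(Z_1 \le q)$, hence everywhere by right-continuity; therefore $\mu = \nu$ almost surely.

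\textbf{Conclusion and the hard part.} With $\mu$ almost surely equal to the fixed measure $\nu$, the conditional assertion ``$Z_1, Z_2, \ldots$ are iid $\mu$ given $\mu$'' collapses to the unconditional statement that $Z_1, Z_2, \ldots$ are iid with law $\nu$, which is the claim. The computation itself is a one-liner; the part that genuinely needs care is the measure-theoretic scaffolding behind de Finetti's theorem --- ensuring the state space is regular enough for the directing-random-measure form of the theorem (automatic here, since the relevant $Z_i$ are real-valued) and using a \emph{countable} determining class of test functions so that the per-$f$ almost-sure equalities can be amalgamated into a single almost-sure identity for $\mu$. It is worth noting that infinite exchangeability cannot be weakened to finiteness: there are finite exchangeable sequences (e.g.\ $X_1, X_2$ iid fair bits with $X_3 = X_1 \oplus X_2$) that are pairwise but not mutually independent, so no purely combinatorial argument can replace the appeal to de Finetti.
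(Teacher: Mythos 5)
Your proof is correct. The paper does not actually prove this lemma --- it is quoted from \citet{Aldous1985} without proof --- and your argument is the standard one underlying that citation: invoke de Finetti's theorem to obtain a directing random measure $\mu$, use pairwise independence to show $\Var(\mu f)=0$ for a countable determining class of test functions $f$, conclude that $\mu$ is almost surely degenerate, and hence that the conditional iid structure is unconditional. The degeneracy computation, the amalgamation over a countable class of indicators, and the closing remark that finite exchangeability would not suffice (your parity example) are all sound, so there is nothing to flag.
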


An immediate consequence of the preceding lemmas is the following corollary. \\

\vspace{-4mm}
\begin{corollary} \label{cor:corl1}
Let $\{T_{j, k_n}\}_{j=1}^{B_n}$ be a collection of $B_n$ trees trained on subsamples from $\df_n$, satisfying the conditions of \autoref{lem:lemma1}. Then, $\{T_{j, \infty}\}_{j=1}^\infty := \lim_{n\to\infty}\{T_{j, k_n}\}_{j=1}^{B_n} $ is an iid sequence of functions. 
\end{corollary}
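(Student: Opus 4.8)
The corollary is obtained by stitching together three ingredients already developed: the within-forest exchangeability of trees, the asymptotic pairwise independence of \autoref{lem:lemma1}, and the Aldous characterization of \autoref{lem:lemma2}. Concretely, the plan is to argue that the limiting sequence $\{T_{j,\infty}\}_{j=1}^\infty$ inherits infinite exchangeability from the finite-$n$ ensembles, to note that \autoref{lem:lemma1} already supplies pairwise independence of this same sequence, and then to invoke \autoref{lem:lemma2}.

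First I would fix $n$ and consider the full sequence $T_{1,k_n}, T_{2,k_n}, \dots$ of trees built on $\df_n$ by repeatedly drawing, independently across $j$, a subsample of size $k_n$ together with a randomization parameter $\xi_j \iid \Xi$. Conditional on $\df_n$ these trees are iid, so by de Finetti the unconditional sequence is infinitely exchangeable; equivalently, for every finite index set and every fixed tuple of evaluation points the joint law of the corresponding tree predictions is invariant under permuting the tree labels. This is precisely the remark that the order in which the trees are grown is immaterial, and it is the content of \autoref{theorem:exch} transported from residuals to the trees themselves.

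Next I would pass to the limit. By \autoref{cond:cond1}, $T_{k_n}\stackrel{d}{\to}T_\infty$, and under the hypotheses of \autoref{lem:lemma1} the finite-dimensional laws of $(T_{1,k_n},\dots,T_{m,k_n})$ converge, for each fixed $m$, to those of the limiting sequence $\{T_{j,\infty}\}_{j=1}^\infty$ that the notation in the statement abbreviates. Permutation invariance of finite-dimensional distributions is preserved under weak convergence --- each permuted marginal equals the unpermuted one at every stage $n$, hence in the limit --- so $\{T_{j,\infty}\}_{j=1}^\infty$ is again infinitely exchangeable. By \autoref{lem:lemma1} this sequence is also pairwise independent: $T_{i,\infty}\indep T_{j,\infty}$ for all $i\neq j$. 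Applying \autoref{lem:lemma2} to the infinitely exchangeable, pairwise-independent sequence $\{T_{j,\infty}\}_{j=1}^\infty$ then gives that it is iid, which is the assertion.

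The main obstacle is the middle step: making precise the sense in which the limiting sequence of trees exists and carries exchangeability. This requires viewing a tree as a random element of a suitable space --- e.g.\ identifying it with its prediction function through its finite-dimensional evaluations, or with its induced partition --- checking that the joint laws of $(T_{1,k_n},\dots,T_{m,k_n})$ converge consistently in $m$ so that a limiting sequence is well defined (a Kolmogorov-extension argument once the consistent family of limit laws is in hand), and confirming that the pairwise-independence conclusion of \autoref{lem:lemma1} refers to that same limit. Given exchangeability at each finite $n$ and the independence bound already proved in \autoref{lem:lemma1}, the remaining work is essentially bookkeeping; the one genuinely substantive point --- that exchangeability survives the passage to the limit --- is immediate from permutation invariance of the finite-dimensional distributions at every finite stage.
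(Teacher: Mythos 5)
Your proposal is correct and follows the same route the paper intends: the corollary is stated there as an immediate consequence of combining the infinite exchangeability of the tree sequence (from the de Finetti mixture construction underlying \autoref{theorem:exch}) with the asymptotic pairwise independence of \autoref{lem:lemma1}, and then invoking \autoref{lem:lemma2}. Your additional care about how exchangeability passes to the limiting sequence is a point the paper leaves implicit, but it does not change the argument.
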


The infinite sequence of subsampled trees enjoys many properties that the finite sequence does not. In particular, we can obtain the following pointwise central limit theorem. \\ 

\vspace{-4mm}
\begin{corollary} \label{cor:CLT}
Let $\{T_{j, k_n}\}_{j=1}^{B_n}$ be a sequence of trees on subsamples from $\df_n$, satisfying the conditions of \autoref{lem:lemma1} and \autoref{cond:cond1}. Further, assume $\bm{x} \in \mathcal{X}$ is such that $0 < \Var(T_\infty(\bm{x})) = \sigma^2(\bm{x}) < \infty $. Then as $n\to\infty$
\begin{equation} \label{eqn:CLT}
\sqrt{B_n}\bigg[\frac{1}{B_n}\sum_{i=1}^{B_n}T_{i,k_n}(\bm{x}) - \mathbb{E}\bigg(\frac{1}{B_n}\sum_{i=1}^{B_n}T_{i,k_n}(\bm{x})\bigg)\bigg] \stackrel{d}{\to} \mathcal{N}(0, \sigma^2(\bm{x}))
\end{equation}
\end{corollary}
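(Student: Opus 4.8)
The plan is to reduce the claim to a classical triangular-array central limit theorem by conditioning on the training data. Fix $\bm{x}$ and write $\bar{T}_{B_n} = \frac{1}{B_n}\sum_{i=1}^{B_n} T_{i,k_n}(\bm{x})$, $\mu_n = \E\!\left(T_{k_n}(\bm{x})\mid \df_n\right)$, and $\tau_n^2 = \Var\!\left(T_{k_n}(\bm{x})\mid \df_n\right)$; note that, conditional on $\df_n$, the trees $T_{1,k_n}(\bm{x}),\dots,T_{B_n,k_n}(\bm{x})$ are genuinely i.i.d.\ (independent subsamples, independent $\xi_j$), which is the finite-$n$ analogue of the i.i.d.\ limit of \autoref{cor:corl1}. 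I would then use the exact decomposition
\[
\sqrt{B_n}\bigl(\bar{T}_{B_n} - \E\,\bar{T}_{B_n}\bigr)
= \underbrace{\frac{1}{\sqrt{B_n}}\sum_{i=1}^{B_n}\bigl(T_{i,k_n}(\bm{x}) - \mu_n\bigr)}_{A_n}
\;+\; \underbrace{\sqrt{B_n}\bigl(\mu_n - \E\,\mu_n\bigr)}_{C_n},
\]
separating the within-forest Monte Carlo fluctuation $A_n$ from the between-data fluctuation $C_n$ of the infinite-forest conditional mean.

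First I would dispose of $C_n$. Conditional independence of the trees given $\df_n$ gives $\cov\bigl(T_{1,k_n}(\bm{x}),T_{2,k_n}(\bm{x})\bigr) = \Var(\mu_n)$, and $\mu_n$ is a complete infinite-order $U$-statistic (an average over all $\binom{n}{k_n}$ subsamples of the $\xi$-averaged tree); since $k_n = o(\sqrt{n})$, the linear Hoeffding term dominates, so Hoeffding's ANOVA bound yields $\Var(\mu_n) = O(k_n^2/n)$ whenever the first-order projection has bounded variance. Hence $\Var(C_n) = B_n\Var(\mu_n) = O(B_n k_n^2/n)$, which tends to $0$ because (up to constants) the rate hypothesis of \autoref{lem:lemma1} reads $B_n^2 k_n^2/n \to 0$; thus $C_n \stackrel{p}{\to} 0$. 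The same bookkeeping shows $\Var\!\bigl(\sqrt{B_n}(\bar{T}_{B_n} - \E\bar{T}_{B_n})\bigr) = \Var(T_{k_n}(\bm{x})) + (B_n-1)\cov\bigl(T_{1,k_n}(\bm{x}),T_{2,k_n}(\bm{x})\bigr) \to \sigma^2(\bm{x})$, using \autoref{cond:cond1} together with uniform integrability of $\{T_{k_n}(\bm{x})^2\}$ to pass $\Var(T_{k_n}(\bm{x})) \to \Var(T_\infty(\bm{x})) = \sigma^2(\bm{x})$.

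It remains to show $A_n \stackrel{d}{\to} \mathcal{N}(0,\sigma^2(\bm{x}))$. Conditional on $\df_n$, $A_n$ is a normalized sum of $B_n$ i.i.d.\ mean-zero terms with variance $\tau_n^2$, so by the Lindeberg--Feller CLT $\E\bigl(e^{itA_n}\mid\df_n\bigr) \to \exp\{-\tfrac12\sigma^2(\bm{x})t^2\}$ in probability provided (i) $\tau_n^2 \stackrel{p}{\to}\sigma^2(\bm{x})$ and (ii) a Lindeberg condition holds along the subsequence. Part (i) follows from $\E\,\tau_n^2 = \Var(T_{k_n}(\bm{x})) - \Var(\mu_n) \to \sigma^2(\bm{x}) - 0$ together with $\Var(\tau_n^2) \to 0$ (another $U$-statistic variance bound, now on $\E(T_{k_n}(\bm{x})^2\mid\df_n)$, under a fourth-moment hypothesis). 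For part (ii) one must check that $\tau_n^{-2}\,\E\!\bigl((T_{k_n}(\bm{x})-\mu_n)^2\, I(|T_{k_n}(\bm{x})-\mu_n| > \varepsilon\sqrt{B_n}\,\tau_n) \mid \df_n\bigr) \to 0$; since the truncation level diverges while the conditional law of $T_{k_n}(\bm{x})$ itself varies with $n$, this needs a uniform-integrability control on the squared tree predictions. Granting (i)--(ii), bounded convergence gives $\E\,e^{itA_n} \to \exp\{-\tfrac12\sigma^2(\bm{x})t^2\}$, hence $A_n \stackrel{d}{\to}\mathcal{N}(0,\sigma^2(\bm{x}))$, and Slutsky's theorem combined with $C_n \stackrel{p}{\to} 0$ delivers \autoref{eqn:CLT}.

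The main obstacle is exactly part (ii) together with the joint double limit: establishing the conditional CLT for $A_n$ uniformly over the random conditional variances $\tau_n^2(\df_n)$ as $B_n, n \to \infty$ together. This forces an explicit (but mild and natural) uniform-integrability / moment hypothesis on $\{T_{k_n}(\bm{x})^2\}$ that is not literally among the stated conditions, but is available either as a standing assumption or as a consequence of the regularity conditions on the base learners given in \autoref{subsec:treespec}.
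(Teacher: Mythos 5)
Your argument is correct in outline, but it is a genuinely different — and considerably more careful — route than the one the paper takes. The paper's entire proof is the single sentence following the statement: \autoref{cor:corl1} says the limiting sequence $\{T_{j,\infty}(\bm{x})\}_{j=1}^\infty$ is i.i.d., so the classical CLT applies to it; the conclusion for the finite-$n$ trees is then asserted by identifying the two. That argument silently interchanges the limit in $n$ (which produces the i.i.d.\ sequence) with the limit in $B$ (which produces normality), and it also needs, implicitly, exactly the uniform-integrability of $\{T_{k_n}(\bm{x})^2\}$ that you flag in order to conclude that the limiting variance is $\sigma^2(\bm{x})=\Var(T_\infty(\bm{x}))$ rather than merely that the distributions converge. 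Your conditional decomposition $\sqrt{B_n}(\bar{T}_{B_n}-\E\bar{T}_{B_n})=A_n+C_n$, with the $U$-statistic variance bound $\Var(\mu_n)=O(k_n/n)$ killing $C_n$ under the rate condition $B_n^2k_n^2/n\to0$ extracted from \autoref{lem:lemma1}, and a triangular-array Lindeberg CLT for $A_n$ given $\df_n$, treats the joint limit honestly and is essentially the H\'ajek-projection strategy of \citet{Mentch2016} that the paper itself only gestures at in the ``Beyond the iid Approximation'' remark of \autoref{subsec:permtests}. What the paper's route buys is brevity and direct reuse of \autoref{lem:lemma2} (Aldous: exchangeable plus pairwise independent implies i.i.d.), which is the conceptual point the authors want to make; what your route buys is a proof that actually tracks $n$ and $B_n$ together and makes the hidden moment hypotheses explicit. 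The one extra assumption you require (uniform integrability / a fourth-moment bound on the tree predictions, needed for $\tau_n^2\stackrel{p}{\to}\sigma^2(\bm{x})$ and the Lindeberg condition) is not among the stated conditions, but since the paper's own one-line proof cannot avoid it either, this is a sharpening of the corollary's hypotheses rather than a defect of your argument.
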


\autoref{cor:CLT} follows directly from applying the Central Limit Theorem to the sequence of univariate random variables $\{T_{j, \infty}(\bm{x})\}_{j=1}^\infty$, which are iid by \autoref{cor:corl1}.

\begin{remark}  For a collection of test points, $\bm{x}_1,..., \bm{x}_{N_t}$, we can also consider the sequence of vectors $\bm{T}_{i,k_n} = [T_{i,k_n}(\bm{x}_1), ..., T_{i,k_n}(\bm{x}_{N_t})]^T$, which are iid by \autoref{cor:corl1}. If we assume that $\Sigma = \E\big[(\bm{T}_{i,k_n} - \E(\bm{T}_{i,k_n}))(\bm{T}_{i,k_n} - \E(\bm{T}_{i,k_n}))^T\big] $ has finite entries, the multivariate central limit theorem gives that as $n\to\infty$
\[
\sqrt{B_n}\bigg[\frac{1}{B_n}\sum_{i=1}^{B_n}\bm{T}_{i,k_n} - \mathbb{E}\bigg(\frac{1}{B_n}\sum_{i=1}^{B_n}\bm{T}_{i,k_n}\bigg)\bigg] \stackrel{d}{\to} \mathcal{N}(0, \Sigma).\]
\end{remark}

\vspace{3mm}

\begin{remark} We can generalize the independence results to a collection of two sets of trees. In particular, suppose that we now train $B_n/2$ trees on $\df_n = \{Z_i\}_{i=1}^n$ and $\df_n^\pi = \{Z_i^\pi\}_{i=1}^n$, where $Z_i^\pi = ( [\bm{X}_\mathcal{S}, \bm{X}^\pi_{-\mathcal{S}}]_i ,Y_i)$. Note that $Z_i^\pi\ \indep\ Z_j, \forall \ i \neq j$, so there is the same independence structure between the datasets as within. Thus, the probability that a pair of trees trained on subsamples of size $k_n$, one from  $\df_n$ and one from $\df_n^\pi$, are independent is the same as the probability that a pair of trees within forest are independent. As such, $\{T_{i,k_n}(\bm{x})\}_{i=1}^{B_n}$ and $\{T^\pi_{i,k_n}(\bm{x})\}_{i=1}^{B_n}$, where $B_n, k_n$ satisfy the conditions of \autoref{lem:lemma1}, behave like two independently iid samples.
\end{remark}

We intentionally leave $\sigma(\bm{x})$ as an abstraction since estimation of $\sigma(\bm{x})$ is not straightforward. Instead, this result will be used as the basis for asymptotic validity of our permutation test which, uncharacteristically, is far more computationally efficient. Going forward, we consider the asymptotic case, so that the sequence of tree predictions behaves like an iid sequence. Further, in the infinite sample case, the number of trees can be made arbitrarily large, and so we allow $B$ to go to infinity with the understanding that it does so in such a way that respects the requirements of \autoref{lem:lemma1}.  This is largely a matter of notational convenience; we could explicitly include the dependence on $n$ in each of the following statements and stress that the limiting distributions only hold as $n\to\infty$.

\subsection{Asymptotic Distribution of MSEs} \label{subsec:mse}
Unfortunately, the MSE is not a linear function of exchangeable random variables and thus requires more careful attention before being used as a test statistic in a permutation test.  In this subsection we establish the asymptotic normality of the MSE which we then utilize to show that the difference in MSEs between two forests is asymptotically normal.  To begin, consider a single test point $(\bm{x},y)$. We can write the MSE as
\begin{equation} \label{eqn:MSERF}
 MSE_{RF}(\bm{x} ; y) = g\left( \frac{1}{B}\sum_{i=1}^B T_{i}(\bm{x}), y  \right)
\end{equation}
where $g(a, b) = (a-b)^2$. In what follows, we suppress the dependence on $y$, writing just $ MSE_{RF}(\bm{x} ; y) = g\left(RF_B(\bm{x})\right)$. We derive the asymptotic distribution of the MSE via the delta method, which we belabor here for its intuitive value. We can then appeal to the mean value theorem to say
\[
g(RF_B(\bm{x})) = g(\E RF_B(\bm{x})) + g'(\tilde{R}_B(\bm{x}))[RF_B(\bm{x}) - \E RF_B(\bm{x})]
\]
where $\tilde{R}_B(\bm{x})$ is a random quantity bounded between $RF_B(\bm{x}), \E RF_B(\bm{x})$. The law of large numbers gives that $RF_B(\bm{x}) = \E RF_B(\bm{x}) + o_P(1) $ and further $\tilde{R}_B(\bm{x}) \stackrel{p}{\to} \E RF_B(\bm{x})$. Next, continuity of $g'$ gives that  $g'(\tilde{R}_B(\bm{x})) \stackrel{p}{\to} g(\E RF_B)$. Thus,
\[
\begin{split}
\sqrt{B}\left[g(RF_B(\bm{x})) - g(\E RF_B(\bm{x})) \right] &= g'(\tilde{R}_B(\bm{x}))\sqrt{B}\left[ RF_B(\bm{x}) - \E RF_B(\bm{x})\right] \\ &\stackrel{d}{\to} \mathcal{N}\left(0, g'(\E RF_B(\bm{x}))^2 \sigma^2 \right) \\
& \stackrel{d}{=} \mathcal{N}\left(0, 4(\E RF_B(\bm{x}) - y)^2 \sigma^2 \right) \ \text{for $g(x) = (x-y)^2$}.
\end{split}
\]
The calculation above is more informative - we see that the MSE is asymptotically a linear function of the random forest prediction. An issue is that the above quantity is centered around $g(\E RF_B(\bm{x}))$ rather than $\E g(RF_B(\bm{x}))$, which we now address. In particular, suppose we begin by centering around $\E g (RF_B(\bm{x}))$ rather than $g(\E RF_B(\bm{x}))$. Then, 
\begin{multline} \label{eqn:DeltaCenter}
    \sqrt{B}\left[ g\left(RF_B(\bm{x}) \right) - \E g\left(RF_B(\bm{x})\right)\right] = \\ \sqrt{B}\left[ g\left(RF_B(\bm{x}) \right) - g(\E RF_B(\bm{x}))\right] + \sqrt{B} \left[ g\left(\E RF_B(\bm{x})\right) - \E g\left(RF_B(\bm{x})\right)\right]
\end{multline}
so that if $ \sqrt{B} \left[ g\left(\E RF_B(\bm{x})\right) - \E g\left(RF_B(\bm{x})\right)\right]= o(1)$, then the same distributional result holds. This is shown in \autoref{lem:centered}.
\begin{lemma} \label{lem:centered}
Assume the conditions needed from \autoref{cor:CLT}. Additionally, assume that $g$ has at least $k$ derivatives for some $k \geq 3$ , and that $g^{(k)}(x) < \infty$ for all $x$. Further, assume that $\E | T_i(\bm{x})|^k <\infty$.   Then, 
\[
\sqrt{B}\left[\E g(RF_B(\bm{x})) - g(\E RF_B(\bm{x})) \right]= \frac{g''(\E RF_B(\bm{x}))\sigma^2}{2\sqrt{B}} + o(B^{-3/2}) = o(1)
\]
\end{lemma}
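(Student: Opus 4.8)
The plan is to Taylor expand $g$ about $\mu_B := \E RF_B(\bm{x})$ and take expectations term by term. By \autoref{cor:corl1} and \autoref{cor:CLT}, in the limiting regime the tree predictions $\{T_i(\bm{x})\}$ are iid with mean $\mu_B = \E T_1(\bm{x})$ and variance $\sigma^2 = \sigma^2(\bm{x})$, so that, writing $D_B := RF_B(\bm{x}) - \mu_B$, we have $\E D_B = 0$ and $\E D_B^2 = \Var(RF_B(\bm{x})) = \sigma^2/B$. Since $g \in C^k$ with $k \ge 3$, Taylor's theorem with Lagrange remainder gives
\[
g(RF_B(\bm{x})) = g(\mu_B) + g'(\mu_B)\,D_B + \tfrac{1}{2}g''(\mu_B)\,D_B^2 + \tfrac{1}{6}g'''(\tilde R_B)\,D_B^3 ,
\]
where $\tilde R_B$ lies between $RF_B(\bm{x})$ and $\mu_B$. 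Taking expectations, the linear term vanishes and the quadratic term equals exactly $g''(\mu_B)\sigma^2/(2B)$, which upon multiplication by $\sqrt{B}$ produces the advertised leading term $g''(\E RF_B(\bm{x}))\sigma^2/(2\sqrt{B})$.

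It then remains to bound the remainder $\tfrac{1}{6}\E[g'''(\tilde R_B)\,D_B^3]$. For an average of $B$ iid summands with $\E|T_1(\bm{x})|^k < \infty$ (hence $\E|T_1(\bm{x}) - \mu_B|^3 < \infty$), the Marcinkiewicz--Zygmund inequality — or the elementary bound $\bigl(\tfrac1B\sum_i (T_i-\mu_B)^2\bigr)^{3/2} \le \tfrac1B\sum_i |T_i-\mu_B|^3$ — yields $\E|D_B|^3 = O(B^{-3/2})$, so the remainder is $O(B^{-3/2})$ once $g'''$ is controlled on the relevant range. Hence $\E[g(RF_B(\bm{x}))] - g(\mu_B) = g''(\mu_B)\sigma^2/(2B) + O(B^{-3/2})$, and
\[
\sqrt{B}\big[\E g(RF_B(\bm{x})) - g(\E RF_B(\bm{x}))\big] = \frac{g''(\E RF_B(\bm{x}))\sigma^2}{2\sqrt{B}} + O(B^{-1}) = o(1),
\]
which is the assertion. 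If one wants the finer bookkeeping indicated in the statement, expanding to higher order and using the higher moments $\E|T_1(\bm{x})|^k < \infty$ peels off additional $O(B^{-j/2})$ corrections; and for the test statistic of interest, $g(a) = (a-y)^2$, all derivatives of order $\ge 3$ vanish, so the Taylor expansion terminates after the quadratic term and the identity holds with $g'' \equiv 2$ and no remainder at all.

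The one genuinely delicate point I anticipate is controlling the Lagrange remainder, since $\tilde R_B$ is random: one needs $g'''(\tilde R_B)\,D_B^3$ to be integrable with the claimed order, and $g'''(\tilde R_B)$ is in general correlated with $D_B$. If $g'''$ is bounded this is immediate via the absolute-moment bound above; under only the pointwise-finiteness hypothesis on $g^{(k)}$ I would localize, using that the law of large numbers gives $RF_B(\bm{x}) \to \mu_B$ almost surely, so that with probability tending to one $\tilde R_B$ lies in a fixed compact neighbourhood of $\mu_B$ on which $g'''$ (or $g^{(k)}$) is bounded, and dispatching the complementary, vanishingly probable event by a crude bound together with the finite $k$-th moment of $T_1(\bm{x})$. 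Everything else is routine moment bookkeeping for iid averages.
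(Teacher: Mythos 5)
Your argument is correct in substance, but it takes a different route from the paper only in the sense that you prove what the paper outsources: the paper's entire proof is a one-line citation of Oehlert (1992, ``A note on the delta method''), which asserts $\E g(RF_B(\bm{x})) = g(\E RF_B(\bm{x})) + g''(\E RF_B(\bm{x}))\sigma^2/(2B) + o(B^{-2})$ under the lemma's hypotheses, followed by multiplication by $\sqrt{B}$. Your Taylor-expansion-of-moments derivation is exactly the argument underlying Oehlert's result, so you have supplied a self-contained proof where the paper gives a reference; that is a gain in transparency, and you correctly isolate the one genuinely delicate point — integrability of $g'''(\tilde R_B)D_B^3$ when $\tilde R_B$ is random and $g^{(k)}$ is only assumed pointwise finite — which is precisely where Oehlert's own regularity conditions enter and where the lemma's stated hypotheses are arguably thinner than what a fully general argument needs. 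Your observation that for the actual loss $g(a)=(a-y)^2$ the expansion terminates and the identity is exact disposes of this entirely for the paper's purposes. One bookkeeping discrepancy: stopping at third order and bounding the Lagrange remainder by $\E|D_B|^3 = O(B^{-3/2})$ yields a remainder of $O(B^{-1})$ after the $\sqrt{B}$ rescaling, not the $o(B^{-3/2})$ displayed in the lemma; to recover the finer rate one must carry the expansion to fourth order and use the signed moment $\E D_B^3 = O(B^{-2})$ rather than the absolute one. Since the only conclusion the paper uses downstream is that the whole expression is $o(1)$, your weaker remainder is harmless.
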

Since the MSE function defined as $g(RF_B(\bm{x})) = \left( RF_B(\bm{x}) - y \right)^2$ satisfies the conditions posited by \autoref{lem:centered}, we can conclude that
\[
 \sqrt{B}\left[ g\left(RF_B(\bm{x}) \right) - \E g\left(RF_B(\bm{x})\right)\right] \stackrel{d}{\to}   \mathcal{N}\left(0, g'(\E RF_B(\bm{x}))^2 \sigma^2 \right).
 \]

Application of the mean value theorem requires that $g'(\E RF_B(\bm{x})) \neq 0$ if and only if $\E RF_B \neq y$. The expected prediction can be written as $\E RF_B(\bm{x}) = m(\bm{x}) + \delta(\bm{x})$, where $\delta(\bm{x})$ is the pointwise bias of the random forest. Recalling that the response is given by $Y = m(\bm{x}) + \epsilon$, if it holds for all $\bm{x}$ that $P(\epsilon \neq \delta(\bm{x})) = 1$, then the result holds for the squared error calculated with respect to almost all $Y$ and thus is trivially satisfied for continuous errors.  A similar result could be applied to any continuously differentiable loss function $g(\cdot, \cdot)$, again under the condition that $g'$ is almost surely non zero. 

\begin{remark}
This CLT result does not depend on \autoref{cor:CLT}. In fact, a similar argument could be used to justify the asymptotic normality of the MSE for any random forest who satisfies a central limit theorem and a law of large numbers (with respect to its own expectation), such as the results in \citet{Mentch2016} and \citet{Wager2018}.
\end{remark}

We can extend this result to the two forest case, where we compare the MSE of $RF_B(\bm{x})$ against that of $RF_B^\pi(\bm{x})$. In particular, if $\E MSE_{RF}(\bm{x}; y) = \E MSE_{RF^\pi}(\bm{x};y)$, we see that 
\begin{equation}\label{eqn:uncond}
\sqrt{B}\left[ MSE_{RF}(\bm{x}; y)  - MSE_{RF^\pi}(\bm{x}; y)  \right] \stackrel{d}{\to} \mathcal{N}\left(0, g'(\E RF_B(\bm{x}))^2\sigma^2 + g'(\E RF^\pi_B(\bm{x}))^2\sigma^2_\pi\right)
\end{equation}

where $\sigma^2_\pi = \Var(T^\pi(\bm{x}))$. This extension uses a similar argument as before to justify centering around the expected MSE instead of the MSE of the expected forest.

Now consider a test set with many points, denoted $\mathcal{T} = [(\bm{x}_1, y_1), ..., (\bm{x}_{N_t}, y_{N_t})]^T$. Given the vector of random forest predictions, $RF_B(\mathcal{T})$, we can calculate the pointwise squared errors as $ MSE_{RF}(\mathcal{T}) = \left[(RF_B(\bm{x}_i) - y_i)^2\right]_{i=1}^{N_t}$. We now introduce some additional notation to establish joint asymptotic normality of this quantity. Let $g_m: \mathbb{R}^m \times \mathbb{R}^m \mapsto \mathbb{R}^m$ be defined as $g_m( [(a_1, b_1),..., (a_m, b_m)]^T) = [g(a_1, b_1),..., g(a_m, b_m)]^T$ for some continuously differentiable function $g(\cdot, \cdot)$.  In particular, we can write the MSE vector as $MSE_{RF}(\mathcal{T}) = g_{N_t}\left(\left[\left(RF_B(\bm{x}_1), y_1\right), ..., \left(RF_B(\bm{x}_{N_t}), y_{N_t}\right)\right] \right)$. 
Then, recalling that a multivariate CLT was shown to hold, we can again appeal to the mean value theorem to write
\[
\sqrt{B}\left(MSE_{RF}(\mathcal{T}) - \E MSE_{ RF}(\mathcal{T}) \right) = \sqrt{B} \nabla g_{N_t}( \tilde{R}_B(\mathcal{T}))\left( RF_B(\mathcal{T}) -\E RF_B(\mathcal{T}) \right) 
\]
where $\tilde{R}_B(\mathcal{T})$ is some point in the hyper-rectangle defined by $\bigotimes_{i=1}^{N_t}[RF_B(\bm{x}_i), \E RF_B(\bm{x}_i)]$, where the intervals are understood to begin at $\text{min}\{RF_B(\bm{x}_i) , \E RF_B(\bm{x}_i)\}$ and $\bigotimes$ is the cartesian product. Note that the area of this rectangle vanishes, so that $\sqrt{B}\left(MSE_{RF}(\mathcal{T}) - \E MSE_{ RF}(\mathcal{T}) \right)$ is also asymptotically a linear rescaling of the random forest prediction. Therefore, the multipoint MSE satisfies
\[
\sqrt{B}\left(MSE_{RF}(\mathcal{T}) -  \E MSE_{ RF}(\mathcal{T}) \right) \stackrel{d}{\to} \mathcal{N}\left(0, [\nabla g_{N_t}(\E RF_B(\mathcal{T}))]^T \Sigma  [\nabla g_{N_t}(\E RF_B(\mathcal{T}))] \right)
\]
as $B\to\infty$ where $\Sigma$ is the covariance induced by the proximity of the test points in $\mathcal{T}$. Then, let $\bm{1/N_t}$ be the $N_t \times 1$ vector of the quantity $1/N_t$, so that as $B\to\infty$,  the limiting distribution of the MSE of a subsampled random forest is given by
\begin{multline} \label{eqn:MSEdistn}
\sqrt{B}(\bm{1/N_t}^TMSE_{RF}(\mathcal{T}) - \bm{1/N_t}^T\E MSE_{RF}(\mathcal{T}))) \stackrel{d}{\to}\\ \mathcal{N}\left( 0, \bm{1/N_t}^T [\nabla g_{N_t}(\E RF_B(\mathcal{T}))]^T\Sigma [\nabla g_{N_t}(\E RF_B(\mathcal{T}))]  \bm{1/N_t}\right)
\end{multline}
where we have appealed to the Cram{\' e}r-Wold theorem in the above.

Finally, to connect back to the testing procedure proposed earlier, we now derive the asymptotic distribution of the differences in MSE between two forests. Let $MSE_{RF}(\mathcal{T})$ be the MSE of a random forest at a set of test points $\mathcal{T}$ and let  $MSE_{RF^\pi}(\mathcal{T})$ denote the MSE of a forest trained on the partially randomized data. By the results above, under the hypothesis that $\E MSE_{RF}(\mathcal{T}) = \E MSE_{ RF^\pi}(\mathcal{T})$, we have that as $B\to\infty$,
\[
\sqrt{B}\bm{1/N_t}^T(MSE_{RF}(\mathcal{T}) - MSE_{RF^\pi}(\mathcal{T})) \stackrel{d}{\to} \mathcal{N}(0, \tau^2)
\]
for some $\tau^2 > 0$ that does not necessarily have a form that is amenable to analysis. To calculate $\tau^2$, first note that for each MSE $g_j(RF_B(\bm{X}_j)) = (RF_B(\bm{X}_j) - Y_j)^2$, by continuity, $g'_j(\tilde{R}_B(\bm{X}_j)) = g'_j(\E RF_B(\bm{X}_j)) + o_P(1)$. Thus, we see that
\[
\begin{split}
MSE_{RF}(\mathcal{T}) -\E MSE_{RF}(\mathcal{T})&= \frac{1}{N_t} \sum_{j=1}^{N_t} MSE_{RF}(\bm{X}_j, Y_j)\\
 &= \frac{1}{N_t} \sum_{j=1}^{N_t}g'_j(\E RF_B(\bm{X}_j))\left[RF_B(\bm{X}_j) - \E RF_B(\bm{X}_j)\right] + o_P(1) \\
 &= \frac{1}{N_t} \sum_{j=1}^{N_t}g'_j(\E RF_B(\bm{X}_j))\left[\frac{1}{B}\sum_{i=1}^B\left[T_i(\bm{X}_j) - \E RF_B(\bm{X}_j)\right]\right] + o_P(1) \\
&= \frac{1}{B}\sum_{i=1}^B \underbrace{\frac{1}{N_t}\sum_{j=1}^{N_t} g'_j(\E RF_B(\bm{X}_j))\left[T_i(\bm{X}_j) - \E RF_B(\bm{X_j})\right]}_{\bar{T}_i} + o_P(1)
\end{split}
\]
where $g_j(\cdot)$ is used to suggest that the squared difference is calculated with respect to $Y_j$.  $\bar{T}_i$ is an iid sequence, so that $\sqrt{B}[MSE_{RF}(\mathcal{T}) - \E MSE_{RF}(\mathcal{T})]$ is asymptotically an iid sum with mean 0 and variance $\sigma^2_{\bar{T}}$ given by
\begin{equation} \label{eqn:sigmabar}
\sigma^2_{\bar{T}} = \frac{1}{N_t}\sum_{j=1}^{N_t} \sigma^2_j \left( g'_j(\E RF_B(\bm{X}_j))\right)^2 + \frac{2}{N_t} \sum_{i < j}g'_j(\E RF_B(\bm{X}_j))g'_i(\E RF_B(\bm{X}_i)) \rho_{ij}
\end{equation}
where $\rho_{ij} = \cov(T(\bm{X}_i), T(\bm{X}_j))$ and $\sigma^2_j = \Var(T(\bm{X}_j))$. We can obtain a similar variance ($\sigma^2_{\bar{T}^\pi}$) for $MSE_{RF^\pi}(\mathcal{T})$, so that under the hypothesis that $\E MSE_{RF}(\mathcal{T}) = \E MSE_{ RF^\pi}(\mathcal{T})$, $\tau^2$ can be seen to be 
\[
\tau^2 = \sigma^2_{\bar{T}} + \sigma^2_{\bar{T}^\pi}.
\]
That the $\bar{T}_i$ and $\bar{T}^\pi_i$ are iid follows from \autoref{lem:lemma2}. Independence of the two samples follows from a similar argument to the second remark after \autoref{cor:CLT}. Crucially, there are many complicated quantities in this \autoref{eqn:sigmabar}, i.e. $\sigma^2_j, \sigma^2_{\pi, j}, \rho_{ij}, \rho^\pi_{ij}$, for which there are not obvious estimators available and thus this result alone is not clearly practical. In the next following sections, we verify the validity of our proposed permutation procedure, which avoids the necessary explicit estimation of these quantities.


\subsubsection{Tree-specific results}\label{subsec:treespec}
Until now, our discussion has remained largely agnostic to the type of base-learners employed, subject to the regularity conditions needed for asymptotic normality. We now argue that the trees typically grown in a random forest satisfy such conditions.  The following result follows a similar strategy as Lemma 2 in \citet{Meinshausen2006} with regularity conditions similar to those imposed in \citet{Wager2018}. \\

\vspace{-4mm}
\begin{prop} \label{prop:prop1}
Assume that $Y = m(\bm{X}) + \epsilon$, where $m(\cdot)$ is continuous on the unit cube. Let $\mathcal{X} = [0,1]^p$, and assume that $X_{i,j} \iid Unif(0,1)$ for $i = 1,..., n$ and $j = 1,...,p$. Then, let $T_n(\bm{x})$ be a tree trained on iid pairs $(\bm{X}_1, Y_1),..., (\bm{X}_n, Y_n)$ such that each leaf of the tree contains a single observation. Further, assume the trees satisfy the following two conditions:

\begin{enumerate}[(i)]
\item $\exists \ \gamma > 0$ such that $P(\text{variable $j$ is split on}) > \gamma$ for $j \in \{1,...,p\}$
\item Each split leaves at least $\gamma n$ observations in each node.
\end{enumerate}

Then, for each $\bm{x}  \in \mathcal{X}$
\[
T_n(\bm{x}) \stackrel{d}{\to} Y|\bm{X} = \bm{x} \ \text{as} \ n \to \infty.
\]
\end{prop}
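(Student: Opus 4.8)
The plan is to exploit that, because each leaf of $T_n$ holds exactly one training point, the prediction is simply $T_n(\bm{x}) = Y_{I_n}$, where $I_n$ indexes the unique observation in the leaf $L_n(\bm{x})$ containing $\bm{x}$. Writing $Y_{I_n} = m(\bm{X}_{I_n}) + \epsilon_{I_n}$, I would (a) show the leaf collapses onto $\bm{x}$, so that $\bm{X}_{I_n} \stackrel{p}{\to} \bm{x}$; (b) invoke continuity of $m$ to get $m(\bm{X}_{I_n}) \stackrel{p}{\to} m(\bm{x})$; and (c) argue that $\epsilon_{I_n}$ has the same law as $\epsilon$ and is asymptotically independent of the first term, so that Slutsky's theorem yields $T_n(\bm{x}) \stackrel{d}{\to} m(\bm{x}) + \epsilon$, which is precisely the law of $Y \mid \bm{X} = \bm{x}$.

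Step (a) is the heart of the argument and would follow the template of Lemma 2 in \citet{Meinshausen2006} together with the leaf-diameter control in \citet{Wager2018}. First, condition (ii) forces node sizes to decay geometrically along any root-to-leaf path, so a single-observation leaf must sit at depth $\Theta(\log n)$; in particular the path from the root to $L_n(\bm{x})$ contains at least $c\log n$ splits for some $c>0$. Second, reading condition (i) as holding conditionally at each node along that path, a Chernoff bound shows that with probability tending to one at least a constant fraction of those splits are on each fixed coordinate $j \in \{1,\dots,p\}$ --- on the order of $\log n$ splits per coordinate. Third, each split on coordinate $j$ cuts off at least a $\gamma$-fraction of the observations in the current node (again by (ii)); since the $X_{i,j}$ are i.i.d.\ $\mathrm{Unif}(0,1)$, a uniform law of large numbers over the combinatorially simple family of axis-aligned sub-boxes guarantees that, with probability tending to one and uniformly over nodes, a $\gamma$-fraction of the points occupies at least a fixed fraction $\gamma'>0$ of the node's $j$-th edge. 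Hence each split on coordinate $j$ shrinks the $j$-th side of the cell by a factor at most $1-\gamma'$, so after $\Omega(\log n)$ such splits the $j$-th side has length $O(n^{-c'}) \to 0$; a union bound over $j$ gives $\text{diam}(L_n(\bm{x})) \stackrel{p}{\to} 0$, and therefore $\bm{X}_{I_n} \stackrel{p}{\to} \bm{x}$.

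For step (c), provided the splitting rule is a measurable function of the covariates $(\bm{X}_1,\dots,\bm{X}_n)$ and the auxiliary randomness alone (and not of the responses), $I_n$ is independent of $(\epsilon_1,\dots,\epsilon_n)$; conditioning on $I_n$ then shows $\epsilon_{I_n} \stackrel{d}{=} \epsilon$ exactly and independent of $\bm{X}_{I_n}$. Combining this with (a)--(b) and Slutsky's theorem finishes the proof.

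I expect the main obstacle to be the ``fraction of mass $\Rightarrow$ fraction of edge length'' step: it cannot be read off a single fixed cell but requires a concentration / uniform-consistency statement over data-dependent cells, which is the real technical content (and the reason the result mirrors Lemma 2 of \citet{Wager2018}). A secondary subtlety is step (c) when the base learners are genuine CART trees whose splits depend on the $Y_i$ --- then $I_n$ and $(\epsilon_i)$ are no longer independent, and one would either restrict attention to $Y$-independent splitting (or honest trees as in \citet{Wager2018}) or supply an extra uniform-integrability/equicontinuity argument to keep the limiting law clean.
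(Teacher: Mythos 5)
Your proposal follows essentially the same route as the paper: identify $T_n(\bm{x})$ with the single response $Y_{I_n}$ in the leaf containing $\bm{x}$, show the leaf's rectangular cell collapses onto $\bm{x}$ via Lemma 2 of \citet{Meinshausen2006} (which the paper simply cites where you propose to re-derive it), apply continuity of $m$, and use independence of the noise from the covariate locations to conclude $Y_{I_n} \stackrel{d}{\to} m(\bm{x}) + \epsilon$. Your added caveat about splits that depend on the responses is a fair observation the paper does not address, but it does not change the fact that the two arguments coincide.
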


The tree predictions thus asymptotically behave like the conditional samples of $Y$ and as a result, should have finite non zero variance. Note that \citet{Breiman2001} recommends building trees to full depth in which case \autoref{cond:cond1} is automatically satisfied.

\subsection{Extension to Permutation Tests} \label{subsec:permtests}

In \autoref{subsec:mse} we established that the sampling distribution of MSE differences between forests was asymptotically Gaussian, but with a computationally intractable variance. Here we show that the permutation distribution converges to that sampling distribution.  We begin by restating a classical theorem from Hoeffding. \\

\vspace{-4mm}
 \begin{theorem} \label{theorem:hoeffding}
\citep{Hoeffding1952}    For a sequence of data $\{X_i\}_{i=1}^N$ and a statistic $S: \mathbb{R}^N \to \mathbb{R}$, define the permutation distribution function as \[\hat{J}_N(t) = \frac{1}{|\mathcal{G}_N|}\sum_{\pi \in \mathcal{G}_N} I\big\{S(X_{\pi(1)}, ... , X_{\pi(N)}) \leq t\big\}
    \]
where $\mathcal{G}_N$ is the group of all permutations of $\{1,...,N\}$. Let $\pi, \pi'$ be two permutations drawn independently and uniformly over $\mathcal{G}_N$, and suppose that as $N\to\infty$
    \begin{equation} \label{eqn:joint_rand}
    \big(S(X_{\pi(1)}, ... , X_{\pi(N)}), S(X_{\pi'(1)}, ... , X_{\pi'(N)})\big) \stackrel{d}{\to} (S, S')
        \end{equation}
where $S, S'$ are iid with cdf $R(\cdot)$. Then for all $t$ at which $R(\cdot)$ is continuous, $\hat{J}_N(t) \stackrel{p}{\to} R(t)$.
    \end{theorem}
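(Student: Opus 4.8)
The plan is to run the standard second-moment (Chebyshev) argument that underlies Hoeffding's original proof. The key observation is that $\hat{J}_N(t)$ is a random variable depending only on the data $\{X_i\}_{i=1}^N$, and that the uniform average over $\pi\in\mathcal{G}_N$ defining it is exactly a conditional expectation over a permutation $\pi$ drawn uniformly at random and independently of the data, i.e.\ $\hat{J}_N(t) = \E_\pi\!\left[I\{S(X_{\pi(1)},\dots,X_{\pi(N)})\le t\}\mid \{X_i\}\right]$. I would then establish the two moment limits
\[
\E\big[\hat{J}_N(t)\big]\to R(t), \qquad \E\big[\hat{J}_N(t)^2\big]\to R(t)^2,
\]
so that $\Var[\hat{J}_N(t)]\to 0$, and conclude $\hat{J}_N(t)\stackrel{p}{\to} R(t)$ by Chebyshev's inequality.

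For the first moment, Fubini gives $\E[\hat{J}_N(t)] = \E_X\E_\pi\!\left[I\{S(X_\pi)\le t\}\right] = P\big(S(X_{\pi(1)},\dots,X_{\pi(N)})\le t\big)$, where the probability is taken jointly over the data and the uniform permutation $\pi$. The marginal content of hypothesis \eqref{eqn:joint_rand} is that $S(X_{\pi(1)},\dots,X_{\pi(N)})\stackrel{d}{\to} S\sim R$, so this probability converges to $R(t)$ at every continuity point $t$ of $R$. For the second moment, expanding the square and reindexing the double sum shows $\E[\hat{J}_N(t)^2] = P\big(S(X_\pi)\le t,\ S(X_{\pi'})\le t\big)$ with $\pi,\pi'$ drawn independently and uniformly from $\mathcal{G}_N$; the full strength of \eqref{eqn:joint_rand} says $(S(X_\pi),S(X_{\pi'}))\stackrel{d}{\to}(S,S')$ with $S,S'$ iid $\sim R$, and since $R$ is continuous at $t$ the limiting product law assigns no mass to the boundary of $(-\infty,t]\times(-\infty,t]$, so by the portmanteau theorem this joint probability converges to $R(t)^2$.

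Putting the pieces together, $\Var[\hat{J}_N(t)] = \E[\hat{J}_N(t)^2] - \big(\E[\hat{J}_N(t)]\big)^2 \to R(t)^2 - R(t)^2 = 0$, whence Chebyshev yields $\hat{J}_N(t) - \E[\hat{J}_N(t)]\stackrel{p}{\to} 0$; combined with $\E[\hat{J}_N(t)]\to R(t)$ this gives $\hat{J}_N(t)\stackrel{p}{\to} R(t)$ at continuity points of $R$, as claimed. Note that the argument uses no distributional assumption on $\{X_i\}$ beyond what is already packaged into \eqref{eqn:joint_rand}.

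The main obstacle is bookkeeping rather than analytic depth: one must be careful that the defining uniform average is genuinely an expectation over a permutation \emph{independent} of the data (so Fubini applies and the squared average really corresponds to two \emph{independent} draws $\pi,\pi'$), and one must invoke continuity of $R$ at $t$ in two distinct places — once to pass from $S(X_\pi)\stackrel{d}{\to}S$ to convergence of the one-dimensional tail probability, and once to guarantee that $(t,t)$ is a continuity point of the product cdf $(s,s')\mapsto R(s)R(s')$ so that the two-dimensional probability converges to $R(t)^2$.
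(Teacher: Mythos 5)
Your argument is correct and is precisely the classical second-moment/Chebyshev proof of this result from \citet{Hoeffding1952} (see also Lehmann and Romano's treatment of the same theorem), which the paper cites without reproving: compute $\E[\hat{J}_N(t)]\to R(t)$ from the marginal of \eqref{eqn:joint_rand}, compute $\E[\hat{J}_N(t)^2]\to R(t)^2$ from the joint convergence for two independent permutations, and conclude via vanishing variance. The two places where continuity of $R$ at $t$ is needed are handled correctly, including the portmanteau step showing the product law puts no mass on the boundary of $(-\infty,t]\times(-\infty,t]$.
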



Direct application of \autoref{theorem:hoeffding} is often challenging. Suppose $\{ X_i\}_{i=1}^n \iid P_X $ and independently $\{Y_i \}_{i=1}^m  \iid P_Y$, and we calculate the statistic $\sqrt{n+m}\left[S(X_1,..., X_n)  - S(Y_1,..., Y_m)\right]$, and further define $p = \lim_{n\to\infty} \frac{n}{n+m}$. Theorem 2.1 of \citet{Chung2013} states that if there exists a function $\psi_{P_Z}$ (which may depend on the distribution  of the data, $P_Z$) such that 
\begin{equation} \label{eqn:asymp_linear}
\sqrt{N}\left[S(Z_1,..., Z_N) - \E S(Z_1, ..., Z_N)  \right] = \frac{1}{\sqrt{N}}\sum_{i=1}^N \psi_{P_Z}(Z_i) + o_{P_Z}(1)
\end{equation}
(i.e.\ the statistic is asymptotically linear), then the permutation distribution of the aforementioned statistic is asymptotically normal with mean 0 and variance given by 
\begin{multline} \label{eqn:tausqd}
\tau^2 = \frac{1}{p(1-p)} \Var(\psi(Z) \ | \ Z \sim pP_X + (1-p)P_Y) \\
= \frac{1}{p(1-p)} \left[p \Var( \psi(X)) + (1-p) \Var(\psi(Y)) \right].
\end{multline}
A key challenge is that $\tau^2$ is often not equal to the variance of the unconditional distribution without additional assumptions on $P_X$ and $P_Y$. 
 A canonical example of this phenomenon is the permutation distribution of the difference in sample means. Given two independent iid samples $X_1,..., X_n$ and $Y_1,..., Y_m$, with $\Var(X) = \sigma^2_X < \infty$, $\Var(Y) = \sigma^2_Y < \infty$, and $\E X = \E Y$, the central limit theorem gives that $\sqrt{n + m}\left(\bar{X}_n - \bar{Y}_m \right) \stackrel{d}{\to} \mathcal{N}\left(0, \frac{1}{p}\sigma^2_X  + \frac{1}{1-p}\sigma^2_Y\right)$ where $p = \lim_{n\to\infty} \frac{n}{n+m}$. The conclusion of \autoref{eqn:tausqd}, however, is that the permutation distribution of the statistic $\sqrt{n+m}\left(\bar{X}_n - \bar{Y}_m \right) $ approaches a normal disribution with mean 0 and variance $\frac{1}{1-p}\sigma^2_X + \frac{1}{p}\sigma^2_Y$ \citep{Lehmann2006}. Thus, unless $\sigma^2_X = \sigma^2_Y$ or $p = \frac{1}{2}$, the permutation distribution fails to match the unconditional distribution. 

The goal here is thus to provide a general result combining the delta method with the results of \citet{Chung2013}. First, we note that the finite forest centered MSE is equal to the original difference rescaled by $g'(\tilde{R}_B(\bm{x})) = g'(\E RF_B(\bm{x})) + o_P(1)$, so that
\[
\sqrt{B}\left[MSE_{RF}(\bm{x}; y) - \E MSE_{RF}(\bm{x};y) \right] = \sqrt{B}g'(\E RF_B(\bm{x})) \left[ RF_B(\bm{x}) - \E RF_B(\bm{x}) \right] + o_P(1)
\]
and therefore the MSE at a single point satisfies \autoref{eqn:asymp_linear} for 
\[
\begin{split}
\psi(T(\bm{x})) &= g'(\E RF_B(\bm{x}))\left[ T(\bm{x}) - \E RF_B(\bm{x}) \right] \\
\psi^\pi(T^\pi(\bm{x})) &= g'(\E RF^\pi_B(\bm{x}))\left[ T^\pi(\bm{x}) - \E RF^\pi_B(\bm{x}) \right].
\end{split}
\]
Thus, the single point MSE satisfies the conditions needed to apply Theorem 2.1 of \citet{Chung2013}. The calculation of the permutation distribution variance follows immediately from \autoref{eqn:tausqd}; the permutation distribution of the statistic  $\sqrt{2B}\left[MSE_{RF}(\bm{x}; y) - MSE_{RF^\pi}(\bm{x}; y) \right]$ converges to a normal distribution with mean 0 and variance
\[
\tau^2 = \frac{1}{1/4}\left[\frac{1}{2}\Var( g'(\E RF_B(\bm{x}))T(\bm{x})) + \frac{1}{2}\Var( g'(\E RF^\pi_B(\bm{x}))T^\pi(\bm{x})) \right].
\]
This is double the variance of \autoref{eqn:uncond}, but that in that case the variance was calculated for a $\sqrt{B}$ rescaling, and so the conditional and unconditional variances agree. Because the forest sizes used in \autoref{alg:alg1} are assumed to be the same, $p = \frac{1}{2}$, so that the permutation test for equivalence of forest predictions is automatically valid in the sense of matching the permutation and unconditional distributions. 
This argument is formalized in the following result.
\begin{theorem} \label{thm:singlepointMSE}
Let $T_{1,k_n}, ..., T_{B,k_n}$ and $T^\pi_{1, k_n},..., T^\pi_{B, k_n}$ be two collections of trees satisfying the conditions of \autoref{lem:lemma1} and \autoref{lem:centered}, and fix a test point with location $\bm{X}$ and response $Y$. Consider a test of the null hypothesis
\[
H_0: \E\left[ MSE_{RF}(\bm{X}; Y) \big| \ \bm{X}, Y\right] = \E\left[ MSE_{RF^\pi}(\bm{X}; Y) \big| \ \bm{X}, Y\right]
\]
using the statistic $\hat{\Delta} = MSE_{RF}(\bm{X}; Y) - MSE_{RF^\pi}(\bm{X}; Y)$. Then under $H_0$, the permutation distribution of $\sqrt{B}\hat{\Delta}$ converges to a normal distribution with mean 0 and variance
\[
\tau^2 = g'(\E RF_B(\bm{x}))^2\sigma^2 + g'(\E RF^\pi_B(\bm{x}))^2\sigma^2_\pi
\]
which is also the variance of the unconditional distribution of $\sqrt{B}\hat{\Delta}$, as $n\to\infty$. Thus, the permutation test attains the asymptotic Type I error rate.
\end{theorem}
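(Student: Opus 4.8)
The plan is to recognize $\sqrt{2B}\,\hat\Delta$ as exactly the two-sample statistic treated by Theorem~2.1 of \citet{Chung2013}, verify its asymptotic-linearity hypothesis through a mean-value-theorem expansion of the MSE, read off the permutation-distribution variance from \autoref{eqn:tausqd} at mixing proportion $p=1/2$, and confirm that it matches the unconditional limiting variance in \autoref{eqn:uncond}. First I would pass to the infinite-sample regime: by \autoref{cor:corl1} the limiting tree predictions $\{T_{i,\infty}(\bm x)\}_{i\ge1}$ are iid, and by the remark following \autoref{cor:CLT} concerning two collections of trees, $\{T_{i,\infty}(\bm x)\}$ and $\{T^\pi_{i,\infty}(\bm x)\}$ behave like two independent iid samples; conditionally on $(\bm x,Y)$ the pooled collection of $2B$ predictions is then an iid draw from the balanced mixture, so the permutation setup of \autoref{theorem:hoeffding} applies with group sizes $B,B$ and $p=1/2$.

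Next I would establish the asymptotic-linearity representation \autoref{eqn:asymp_linear} for the single-point MSE. Writing $MSE_{RF}(\bm x;Y)=g(RF_B(\bm x))$ with $g(a)=(a-Y)^2$, the mean value theorem gives $g(RF_B(\bm x))-g(\E RF_B(\bm x))=g'(\tilde R_B(\bm x))[RF_B(\bm x)-\E RF_B(\bm x)]$ with $\tilde R_B(\bm x)$ between $RF_B(\bm x)$ and $\E RF_B(\bm x)$; the law of large numbers for $\{T_{i,\infty}(\bm x)\}$ and continuity of $g'$ give $g'(\tilde R_B(\bm x))=g'(\E RF_B(\bm x))+o_P(1)$, and \autoref{lem:centered} replaces the centering $g(\E RF_B(\bm x))$ by $\E g(RF_B(\bm x))$ at cost $o(B^{-1/2})$. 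Collecting terms yields
\[
\sqrt{B}\bigl[MSE_{RF}(\bm x;Y)-\E MSE_{RF}(\bm x;Y)\bigr]=\frac{1}{\sqrt{B}}\sum_{i=1}^{B}\psi\bigl(T_{i,\infty}(\bm x)\bigr)+o_P(1),\qquad \psi(T(\bm x))=g'(\E RF_B(\bm x))\bigl[T(\bm x)-\E RF_B(\bm x)\bigr],
\]
and likewise for $MSE_{RF^\pi}$ with $\psi^\pi$. The moment bound $\E|T_i(\bm x)|^k<\infty$ from \autoref{lem:centered} supplies the integrability of $\psi,\psi^\pi$ needed by \citet{Chung2013}, and $g'(\E RF_B(\bm x))\neq0$ holds a.s.\ for continuous errors since then $\E RF_B(\bm x)=m(\bm x)+\delta(\bm x)\neq Y$ a.s., so the expansion is nondegenerate.

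With both forests asymptotically linear, Theorem~2.1 of \citet{Chung2013} applies to $\sqrt{2B}\,\hat\Delta$: under $H_0$ its permutation distribution function converges in probability, at every continuity point, to the centered normal CDF with variance \autoref{eqn:tausqd} evaluated at $p=1/2$, namely $4\bigl[\tfrac12\Var(\psi(T(\bm x)))+\tfrac12\Var(\psi^\pi(T^\pi(\bm x)))\bigr]=2\bigl[g'(\E RF_B(\bm x))^2\sigma^2+g'(\E RF^\pi_B(\bm x))^2\sigma^2_\pi\bigr]$. Rescaling from $\sqrt{2B}$ to $\sqrt{B}$ divides the variance by $2$, so the permutation distribution of $\sqrt{B}\,\hat\Delta$ converges to $\mathcal N(0,\tau^2)$ with $\tau^2=g'(\E RF_B(\bm x))^2\sigma^2+g'(\E RF^\pi_B(\bm x))^2\sigma^2_\pi$; this is exactly the unconditional limiting variance of $\sqrt{B}\,\hat\Delta$ from \autoref{eqn:uncond}, the agreement being forced by $p=1/2$ (in the sample-mean example it fails precisely when $p\neq1/2$ and the component variances differ). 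Since the permutation law and the unconditional law of $\sqrt{B}\,\hat\Delta$ share the same continuous limit, the random permutation $(1-\alpha)$-quantile converges in probability to the true quantile, so the rejection probability under $H_0$ tends to $\alpha$, which is the asserted asymptotic Type~I error control.

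The main obstacle is the second step: making the mean-value-theorem expansion rigorous so that the $o_P(1)$ remainder is robust to the permutation, i.e.\ remains negligible not only under the original sampling but under every relabeling of the pooled sample. This is exactly where the functional form of the asymptotic-linearity hypothesis of \citet{Chung2013}, together with the moment bound from \autoref{lem:centered}, does the real work, and where the order of the two limiting operations ($B\to\infty$ and $n\to\infty$, the latter already absorbed into the iid limiting trees of \autoref{cor:corl1}) has to be handled with care.
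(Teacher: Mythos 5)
Your proposal is correct and follows essentially the same route as the paper: the mean-value-theorem expansion plus \autoref{lem:centered} to verify the asymptotic-linearity condition \autoref{eqn:asymp_linear}, Theorem~2.1 of \citet{Chung2013} with $p=1/2$ to read off the permutation variance from \autoref{eqn:tausqd} and match it to \autoref{eqn:uncond}, and convergence of the permutation quantile (the paper invokes Lemma~11.2.1 of \citet{Lehmann2006} here) to conclude Type~I error control. The only difference is organizational: the paper places the linearity and variance-matching argument in the text preceding the theorem and reserves the proof environment for the quantile-convergence step, whereas you assemble the whole chain in one place.
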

\begin{proof}
The only claim that remains to be verified is that the permutation test attains the Type I error rate. Let $\Phi(\cdot)$ be the standard normal cdf, and let $\hat{J}_B(t)$ be the (random) cdf of the permutation distribution, with corresponding quantile function $\hat{J}^{-1}_B(q)$. By the argument preceding the theorem statement, we have that $\sup_t | \hat{J}_B(t) - \Phi(t/\tau)| \stackrel{p}{\to}0$. Then, by Lemma 11.2.1 of \citet{Lehmann2006}, for any number $q \in (0, 1)$,  $\hat{J}^{-1}_B(q) \stackrel{p}{\to} \tau\Phi^{-1}(q)$. In particular, for a given significance level $\alpha$, the 1-sided permutation test of $H_0$ at the level $\alpha$ has a critical value $\hat{J}_B^{-1}(1-\alpha)$ which converges in probability to $\tau \Phi^{-1}(1-\alpha)$. Thus, as $B\to\infty$,
\[
P(\sqrt{B}\hat{\Delta} \geq \hat{J}_B^{-1}(1-\alpha) | H_0) \to P(\sqrt{B}\hat{\Delta} \geq \tau \Phi^{-1}(1-\alpha) | H_0) \to \alpha .
\]\end{proof}

We now must extend this result to multipoint test sets. However, Theorem 2.1 of \citet{Chung2013} deals only with the scalar case. As such, recall that the multipoint MSE can be broken down into a sum of iid components. In particular, letting $\mathcal{T}$ be a test set consisting of $N_t$ points, it was shown in \autoref{subsec:mse} that 
\[
\sqrt{B}\left[MSE_{RF}(\mathcal{T}) - \E MSE_{RF}(\mathcal{T}) \right] = \frac{1}{\sqrt{B}} \sum_{i=1}^B \bar{T}_i + o_P(1)
\]

where $\bar{T}_i$ is an iid sequence of random variables, each with mean 0 and variance presented in \autoref{eqn:sigmabar}. Thus, the scaled and centered MSE satisfies the linearity condition presented in \autoref{eqn:asymp_linear}. In particular, $\bar{T}_1, ..., \bar{T}_B \iid P$ and $\bar{T}_1^\pi, ..., \bar{T}_B^\pi \iid P^\pi$, and we are testing $H_0: \E \bar{T}_i = \E \bar{T}_i^\pi$. Thus, because each is calculated with $B$ trees, the same results hold and the test is asymptotically valid at multiple test points. This leads naturally to the following culminating theorem, the proof of which follows an identical argument to that of \autoref{thm:singlepointMSE}, and is therefore omitted. 
\begin{theorem} \label{thm:multipointMSE}
Let $T_{1,k_n}, ..., T_{B,k_n}$ and $T^\pi_{1, k_n},..., T^\pi_{B, k_n}$ be two collections of trees satisfying the conditions of \autoref{lem:lemma1} and \autoref{lem:centered}, and fix a collection of test points $\mathcal{T}$. Consider a test of the null hypothesis
\[
H_0: \E\left[ MSE_{RF}(\mathcal{T})\ \big| \ \mathcal{T}\ \right] = \E\left[ MSE_{RF^\pi}(\mathcal{T}) \ \big| \ \mathcal{T}\ \right]
\]
using the statistic $\hat{\Delta} = MSE_{RF}(\mathcal{T}) - MSE_{RF^\pi}(\mathcal{T})$. Then, assuming $H_0$, the permutation distribution of $\sqrt{B}\hat{\Delta}$ converges to a normal distribution with mean 0 and variance given by  \autoref{eqn:sigmabar} which is also the variance of the unconditional distribution of $\sqrt{B}\hat{\Delta}$, as $n\to\infty$. Thus, the permutation test attains the asymptotic Type I error rate.
\end{theorem}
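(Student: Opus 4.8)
The plan is to reduce \autoref{thm:multipointMSE} to the scalar permutation central limit theorem of \citet{Chung2013}, exactly as in the proof of \autoref{thm:singlepointMSE}, by feeding it the asymptotically linear representation of the multipoint MSE already derived in \autoref{subsec:mse}. Recall that for a test set $\mathcal{T}$ of $N_t$ points we showed
\[
\sqrt{B}\left[MSE_{RF}(\mathcal{T}) - \E MSE_{RF}(\mathcal{T})\right] = \frac{1}{\sqrt{B}}\sum_{i=1}^B \bar{T}_i + o_P(1),
\]
with the analogous identity for $MSE_{RF^\pi}(\mathcal{T})$ in terms of $\bar{T}^\pi_i$, where $\bar{T}_1,\dots,\bar{T}_B \iid P$ and $\bar{T}^\pi_1,\dots,\bar{T}^\pi_B \iid P^\pi$ (iid within each group by \autoref{cor:corl1} and \autoref{lem:lemma2}, mutually independent across groups by the remark following \autoref{cor:CLT}), each with mean $0$ and variances $\sigma^2_{\bar{T}}$, resp.\ $\sigma^2_{\bar{T}^\pi}$, of the form \autoref{eqn:sigmabar}. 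Hence $\hat{\Delta} = MSE_{RF}(\mathcal{T}) - MSE_{RF^\pi}(\mathcal{T})$ equals, up to $o_P(B^{-1/2})$, the difference of the two group means of the pooled sequence $\bar{T}_1,\dots,\bar{T}_B,\bar{T}^\pi_1,\dots,\bar{T}^\pi_B$, so the asymptotic linearity condition \autoref{eqn:asymp_linear} is met with the (already centered) identity as influence function.

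Next I would invoke Theorem 2.1 of \citet{Chung2013} applied to this pooled sequence. Since the two groups have equal size, $p = \lim B/(2B) = 1/2$, and \autoref{eqn:tausqd} gives that the permutation distribution of $\sqrt{2B}\,\hat{\Delta}$ converges (conditionally on the data, in probability) to $\mathcal{N}(0,\tau^2_{\mathrm{perm}})$ with $\tau^2_{\mathrm{perm}} = \tfrac{1}{p(1-p)}[p\,\sigma^2_{\bar{T}} + (1-p)\,\sigma^2_{\bar{T}^\pi}] = 2(\sigma^2_{\bar{T}} + \sigma^2_{\bar{T}^\pi})$; after accounting for the $\sqrt{B}$ versus $\sqrt{2B}$ rescaling, the permutation distribution of $\sqrt{B}\,\hat{\Delta}$ converges to $\mathcal{N}(0,\tau^2)$ with $\tau^2 = \sigma^2_{\bar{T}} + \sigma^2_{\bar{T}^\pi}$. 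This is exactly the unconditional limiting variance of $\sqrt{B}\,\hat{\Delta}$ established under $H_0$ at the end of \autoref{subsec:mse}; the equality of conditional and unconditional variances is precisely the $p=1/2$ cancellation already observed in the single-point case, so no additional assumptions on $P$ and $P^\pi$ are required.

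The Type I error claim then follows verbatim from the proof of \autoref{thm:singlepointMSE}: writing $\hat{J}_B$ for the (random) cdf of the permutation distribution of $\sqrt{B}\,\hat{\Delta}$ and $\Phi$ for the standard normal cdf, we have $\sup_t|\hat{J}_B(t) - \Phi(t/\tau)| \stackrel{p}{\to} 0$, so Lemma 11.2.1 of \citet{Lehmann2006} yields $\hat{J}^{-1}_B(1-\alpha)\stackrel{p}{\to}\tau\Phi^{-1}(1-\alpha)$, and combining this with the unconditional CLT for $\sqrt{B}\,\hat{\Delta}$ via Slutsky gives $P(\sqrt{B}\,\hat{\Delta}\ge\hat{J}^{-1}_B(1-\alpha)\mid H_0)\to\alpha$. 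The only place the argument is not literally identical to the single-point one — and the main (mild) obstacle — is that \citet{Chung2013} state their result for a scalar statistic, whereas the primitive object here is the $N_t$-dimensional prediction vector; the resolution, which I would make explicit, is that the delta-method linearization of \autoref{subsec:mse} already collapses the multipoint MSE difference into a \emph{scalar} asymptotically linear statistic in the pooled $\bar{T}_i$, so the scalar theorem applies directly, and one need only check that the $o_P(B^{-1/2})$ remainder in that linearization stays negligible after an arbitrary relabelling of the $2B$ trees — which holds because the remainder is controlled uniformly over permutations of an exchangeable pooled sequence, equivalently because one may treat the $\bar{T}_i$ themselves as the data handed to \citet{Chung2013}. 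Everything else is bookkeeping mirroring \autoref{thm:singlepointMSE}.
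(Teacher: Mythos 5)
Your proposal is correct and follows essentially the same route as the paper, which omits the proof of \autoref{thm:multipointMSE} precisely because it reduces to the scalar argument of \autoref{thm:singlepointMSE} via the asymptotically linear representation $\sqrt{B}[MSE_{RF}(\mathcal{T}) - \E MSE_{RF}(\mathcal{T})] = \tfrac{1}{\sqrt{B}}\sum_i \bar{T}_i + o_P(1)$ and the $p=1/2$ cancellation. Your explicit handling of the scalar-versus-multivariate issue and the rescaling bookkeeping matches the discussion the paper gives immediately before the theorem statement.
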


\subsubsection{Beyond the iid Approximation}
We note that the conditions of \autoref{lem:lemma1} are likely far stronger than needed to attain the result in \autoref{thm:multipointMSE}. The proofs of validity for the permutation tests rely on projecting the random forest (which is a correlated sum $\frac{1}{B} \sum_{i=1}^B T_i(\bm{x})$) onto a sum of iid random variables, $\sum_{i=1}^n \psi_n(Z_i)$ for some function $\psi_n$, to which a central limit theorem can then apply. Indeed, this is exactly the approach of the H{\' a}jek projection and H-decomposition used respectively by \citet{Mentch2016} and \citet{Wager2018}. In these works, it is roughly shown that, under constraints on the forest construction, the random forest prediction at a point $\bm{x}$ satisfies
\[
\frac{1}{\sqrt{B}}\sum_{i=1}^B\left[T_i(\bm{x}) - \E RF_B(\bm{x})\right] = \sum_{i=1}^n \psi_n(Z_i) + o_P(1).
\]
For example, if the H{\' a}jek projection is used, $\psi_n(Z_i) = \sqrt{B}\E\left[ RF_B(\bm{x}) \ \big| \ T_i(\bm{x}) \right] - \E RF_B(\bm{x})$. Moreover, as mentioned in the remark following \autoref{lem:centered}, the fact that the MSE is asymptotically linear is independent of the iid approximation, and thus the MSE for these forests is also asymptotically linear.

\section{Simulations}\label{sec:sims}

We now apply our testing procedure in a number of settings with varying regression functions and covariate structures. We simulate data from four models summarized in \autoref{tab:ydistn}, with covariate structures summarized in \autoref{tab:xdistn}. For each of our simulations, we train random forests using the \texttt{randomForest} package in R \citep{Liaw2002} using the default \texttt{mtry} parameters.

\begin{table}[H]
    \centering
    \begin{tabular}{llc}
    \hline
       Model \#  &  Data Generating Model & Covariate Structure\\
       \hline
         1 &  $Y = \beta X_1 + \beta I(X_6 = 2) + \epsilon$ & M1\\
         2 &  $ Y =  \beta\sin(\pi I(X_7 = 2)X_1) + 2\beta(X_3-.05)^2 + \beta X_4 +   \beta X_2 + \epsilon$ & M1 \\
         3 & $P(Y = 1| \bm{X}) = \text{expit}\big[\beta\sum_{j=2}^{5}X_j\big]$& M2 \\
         4 &  $ Y =  RF_{\texttt{eBird}}(\bm{X}) + \epsilon$ & \texttt{eBird} \\
         \hline
    \end{tabular}
    \caption{Distributions of $Y|\bm{X}$ for each model. $\text{expit}(z) = \frac{1}{1 + e^{z} }$.} 
    \label{tab:ydistn}
\end{table}

\begin{table}[H]
    \centering
    \begin{tabular}{l|l}
    \hline
       Model \#  &  Covariate Structure \\
       \hline
         M1 & $X_1,..., X_5 \iid Unif(0,1)$, $X_6,..., X_{10} \iid \text{Multinomial}(1, [\frac{1}{3}, \frac{1}{3}, \frac{1}{3}]^T)$ \\
         M2 & $X_1,..., X_{500} \sim \text{AR}_1(0.15)$ \\
         \texttt{eBird} & Data from \citet{Coleman2017} - 12 variables + 2 proxy variables\\ 
         \hline
    \end{tabular}
    \caption{Distribution of $\bm{X}$ for various simulation studies.}
    \label{tab:xdistn}
\end{table}

Model 1 is a standard ANCOVA model, which is intended to include both an important discrete and continuous predictor, to demonstrate the robustness of the proposed procedure to covariate type. Here we test the importance of $(X_1,X_6, X_2, X_7)$ where $X_1, X_6$ are important, $X_1, X_2$ are continuous, and $X_6, X_7$ are categorical.  Model 2 resembles the MARS data generating model \citep{Friedman1991} commonly used in random forest studies, but with a modification to include an important discrete covariate. In both settings, we draw $n = 2000$ points from the joint distribution of $(\bm{X}, Y)$, subsample sizes of $k_n = n^{0.6} \approx 95$, and build $B = 125$ trees in each forest. Predictions were made at $N_t = 100$ test points, each drawn from the same joint distribution as the training data. Note that the null hypothesis, as defined in \autoref{eqn:hypoMSE}, is conditional on the test points used. These simulations change the null hypothesis each time, because the validation set changes. Thus, the simulations mimic the common practice of random splitting the data into a training and validation fold. 

For Models 1 and 2, we focus on a marginal signal to noise ratio, which is controlled by the parameters $\beta$ and $\sigma$. We fix $\beta = 10$ across all simulations let $\sigma = 10/j$ where $j$ takes 9 equally spaced values between 0.005 and 2.25 so that for small $k$, the signal to noise ratio (SNR) is small. The results are shown in \autoref{fig:simresults}. We see that the test maintains the nominal type I error rate and attains high power for marginal SNRs near 1 for all variables except $X_7$ in Model 2.  Note also that the type I error rate appears insensitive to the covariate structure. In the MARS model, we see that the test has more power against $X_3$ than $X_7$, because $X_7$ is only important insofar as it interacts with $X_1$. 

 Model 3 is an adaptation of the model used in \citet{Candes2016} for high-dimensional correlated data. Here we test for the significance of $X_2$, which is important, and also $X_1$ and $X_{500}$, which are unimportant, but $X_1$ is highly correlated with $X_2$ and $X_{500}$ is much more weakly correlated. \citet{Candes2016} demonstrated that the standard logistic regression p-values in this situation are far from uniform under $H_0$, so that standard parametric inference may not be valid. Random forests, on the other hand, have been shown \citep{Biau2012, Scornet2015} to be largely insensitive to the dimension of the ambient feature space, and instead sensitive only to the ``strong" feature space. This setting helps to explore the utility of our method in the high dimensional sparse signal case. 

We limit $n= 600$ so that $p/n$ is not small, though the dimension of the strong features is still small relative to $n$. We let $k_n = n^{0.6} \approx 46$, $B = 125$, $N_t = 100$, and vary the $\beta$ coefficient according to 8 equally spaced values between 0.01 and 2.5 and also for 7 equally spaced values between 5 and 20.  The results are shown in the bottom panel of \autoref{fig:simresults}.  Note that the test resolves the biased p-value issue associated with the standard glm procedure and is still able to attain reasonable power for the effect of $X_2$. The power is likely limited by the fact that for large $\beta$, the change in the marginal effect of each covariate only changes $P(Y = 1 | \bm{X})$ slightly due to the rapidly decaying first derivative of the $\text{expit}(z)$ function.

Finally, we turn to Model 4 where the true data generating model is a random forest. We utilize a dataset from \citet{Coleman2017} describing the occurrence of tree swallows and to construct $ RF_{\texttt{eBird}}$, we draw 5000 points from the data, and train $ RF_{\texttt{eBird}}$, a random forest with $\texttt{mtry} = 9$ and 1000 total trees. To simulate from this model, we draw (without replacement) samples of size $n$ from the remaining 20727 points, predict at them using $RF_{\texttt{eBird}}$, and add Gaussian noise. We test for the effect of two variables: $\texttt{eff.hours}$, which corresponds to the number of hours a user expended upon a hike, and $\texttt{dfs}$, which is a fractional measurement of day of year. We further include two proxy variables (not used to train $ RF_{\texttt{eBird}}$), defined as $\texttt{eff.hours.proxy} = \frac{\texttt{eff.hours} + Z_{0.5}}{\sqrt{\Var(\texttt{eff.hours})} + 0.5} $ and $\texttt{dfs.proxy} = \frac{\texttt{dfs} + Z_{0.025}}{\sqrt{\Var(\texttt{dfs})} + 0.025} $ where $Z_\sigma$ is a standard normal random variable with variance $\sigma^2$. The purpose of this construction is that the proxy variables' relationship with $Y$ is solely dictated by their dependence on their original copy. 
\begin{figure}
\centering
\includegraphics[width = .925\textwidth]{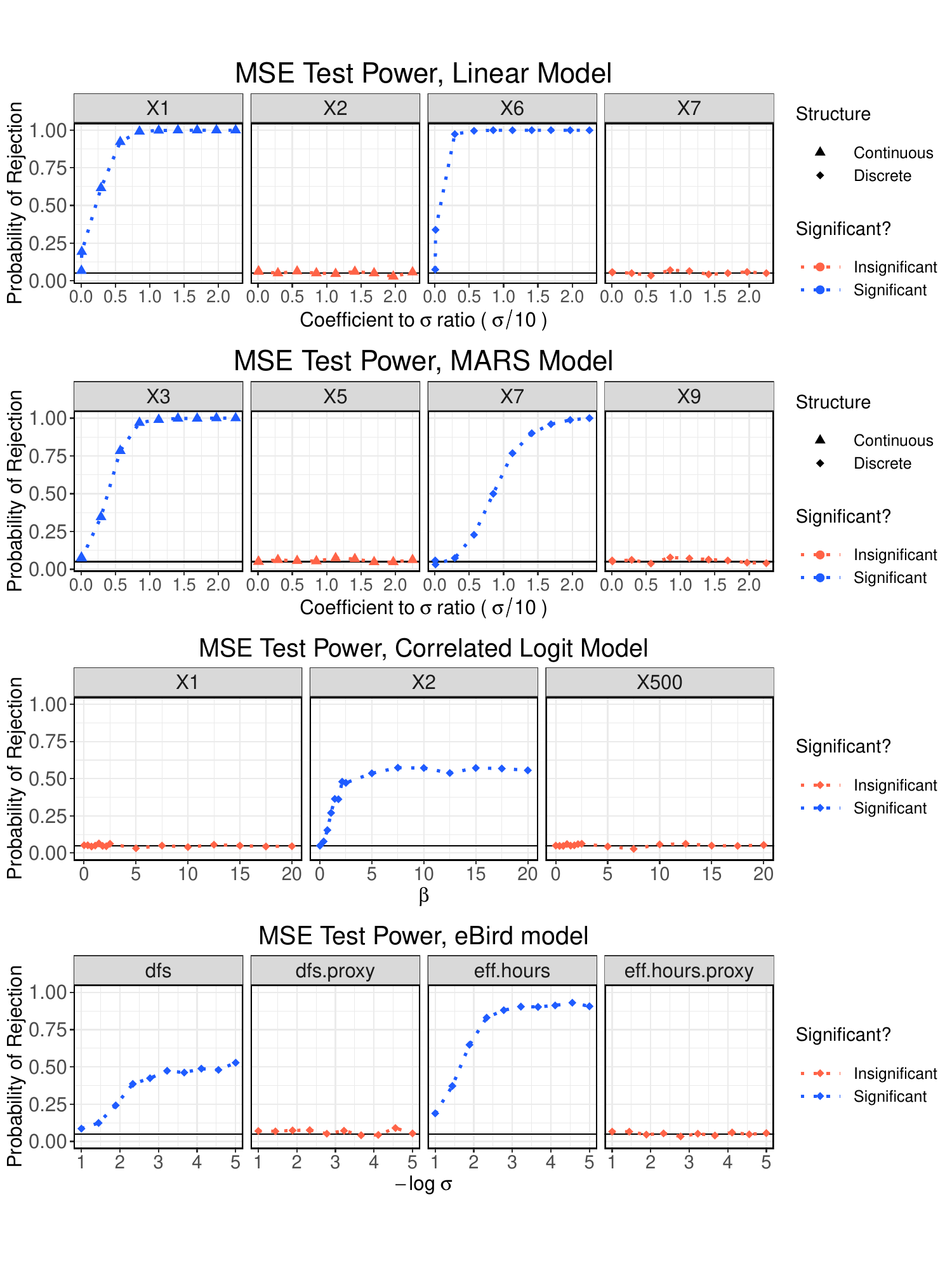}
\caption{Simulation results for each of the models from \autoref{tab:ydistn}. Black line corresponds to $\alpha = 0.05$, the nominal level} \label{fig:simresults}
\end{figure}

In Model 4, we let $n = 2000$, $k_n = n^{0.6}$, $B = 125$, $N_t = 100$, and let $\sigma = e^{-j}$ for 10 values of $j$ equally spaced between 1 and 5.  The results of this simulation are show in \autoref{fig:simresults}. We see that again the test maintains the nominal type I error rate with modest power for signal variables. Moreover, the procedure correctly identifies the true variables as important over their proxies.

\section{Applications to Ecological Data}
\label{sec:models}
We now apply our testing procedure on two ecological datasets where random forests have been shown to perform well in recent work.

\textbf{eBird:  }We first consider the eBird data described in the previous section to construct a simulated random forest model.  Here we utilize the original data as considered in \citet{Coleman2017}.  The standard task is to predict tree swallow \texttt{occurrence} during the fall migration season in a particular geographic area referred to as Bird Conservation Region (BCR) 30. This is a Citizen Science project where observers submit reports detailing when and where they recorded observations.  The response in each row of the data is either 0 or 1 corresponding to whether a tree swallow was observed during that particular outing.  Features include information about latitude, longitude, time of year, as well as observer, environmental, temperature, and land cover characteristics. The data consists of $n = 25727$ observations on $23$ features, gathered between 2008 and 2013. 
\citet{Coleman2017} carry out a testing procedure based on the parametric approach in \citet{Mentch2016} but due to the limitations described in previous sections, are limited to a test sample of only 25 points.

We first apply \autoref{alg:alg1} to test the importance of any variables in predicting \texttt{occurrence}, analogous to an overall F-test in multiple linear regression. Here we select 15\% of the available observations ($\approx$ 3800 points) uniformly at random to serve as the test set where the hypotheses will be evaluated.  The random forests were trained with the \texttt{ranger} package using the default $\texttt{mtry} = 4$, subsamples of size $k_n = n^{0.6}$, and consisting of $B = 250$ trees in each.  The results are shown on the left hand side of  \autoref{fig:eBird_data}. There is clear evidence for signal in the data, with an overall p-value of $p  < 0.0001$.  Next, to produce an output similar to the out-of-bag importance scores traditionally computed, we repeat the testing procedure for each covariate individually, recording the marginal importance for each as the number of standard deviations away that the original MSE difference is from the center of the permutation distribution.  The results are shown in the right hand side of \autoref{fig:eBird_data}. We see that \texttt{dfs}, which corresponds to the day of the year, \texttt{eff.hours}, which corresponds to a users' effort (in time), and \texttt{aster.elev}, which corresponds to elevation, are the most important features. Time of year (\texttt{dfs}) and elevation (\texttt{aster.elev}) have an intuitive relationship with occurrence, serving as proxies for climate conditions. Larger \texttt{eff.hours} suggest that a user spent more time out in the field on a particular day, meaning they were more likely to observe a tree swallow because of increased birding time.

\begin{figure}
    \centering
        \begin{subfigure}[b]{0.475\textwidth}
        \centering
	\includegraphics[width = \textwidth]{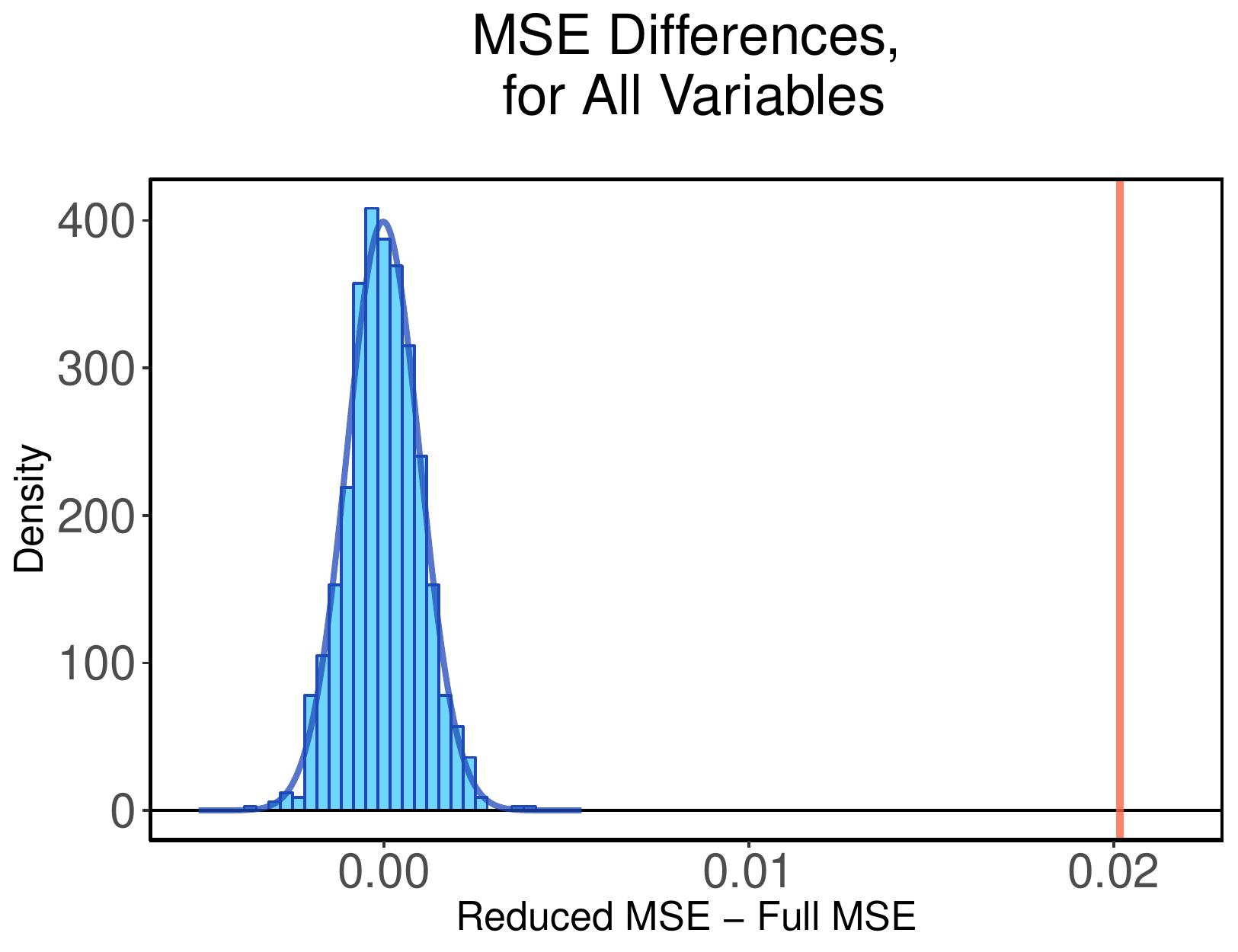}
	\caption{MSE Test for all covariates}
    \end{subfigure}
    \begin{subfigure}[b]{0.475\textwidth}
        \centering
	\includegraphics[width = \textwidth]{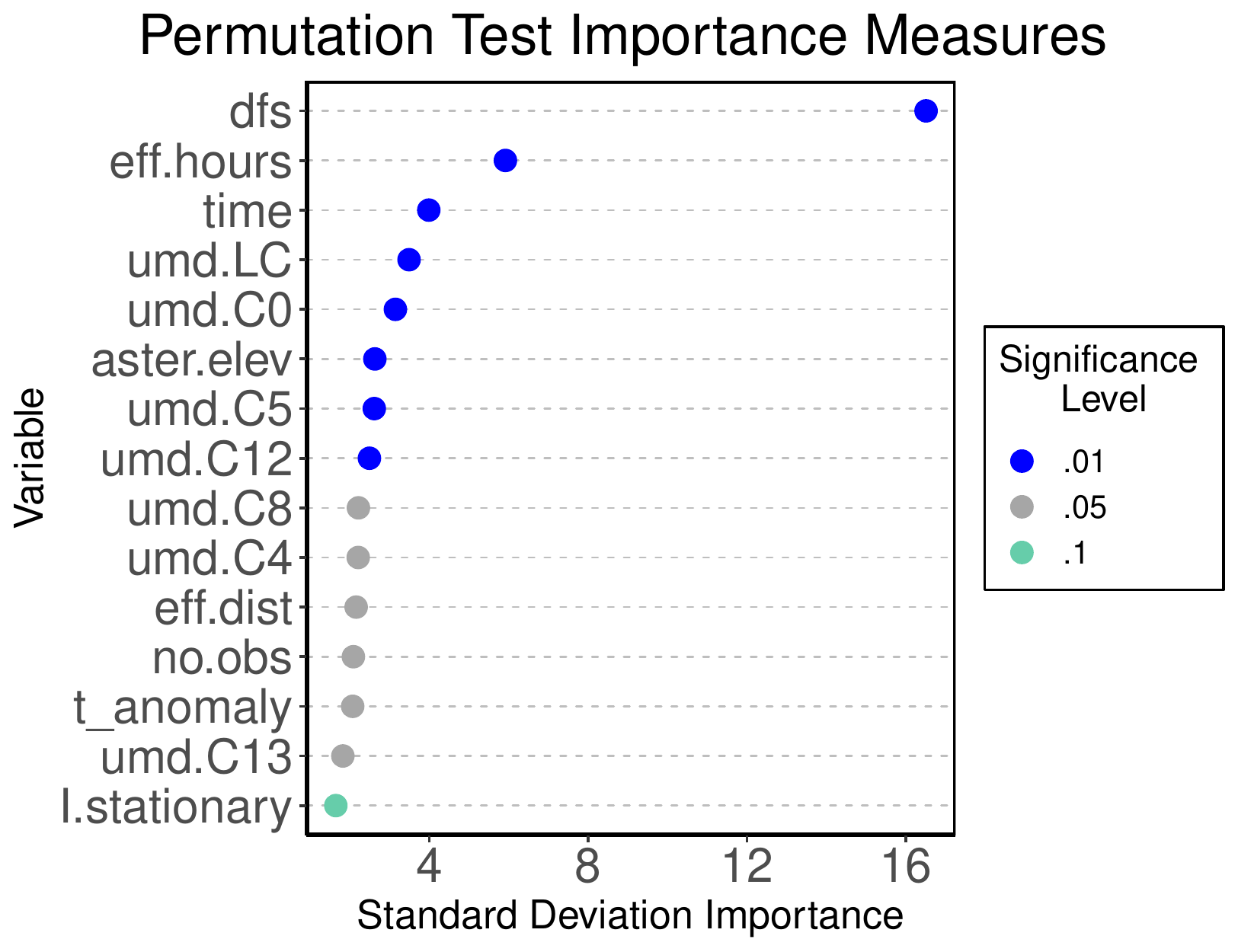}
	\caption{Marginal Importance Scores}
    \end{subfigure}
	\caption{Results on the eBird data from \citep{Sullivan2009, Sullivan2014}. Red line indicates observed value, and histograms of differences in MSE after permutation are overlayed by an estimated normal density.}\label{fig:eBird_data}
\end{figure}

\textbf{Forest Fires:  }\citet{Cortez2007} sought to predict \texttt{$\log(1 + \text{area})$} burned by several fires in northern Portugal using covariate information on location, time of year, and local weather characteristics. The data contains $n = 537$ observations on $13$ features. \citet{Cortez2007} found that a naive mean predictor attained the lowest RMSE - suggesting that there is weak signal in the data.  We carry out our testing procedure in exactly the same fashion as the eBird data, using $\texttt{mtry} = 12$ and $k_n = n^{0.6} \approx 43$, $B = 250$ trees for the importance test and $B = 500$ trees for the overall test; results are shown in \autoref{fig:ff_data}.  The overall test suggests that there is signal in the data ($p = 0.0040$), albeit a weaker effect than in the preceding eBird case study.  The importance procedure suggests that only \texttt{wind} -- the wind speed at the location of the fire -- is significant at the 0.05 level.

\begin{figure}
    \centering
    \begin{subfigure}[b]{0.475\textwidth}
    	\centering
	\includegraphics[width = \textwidth]{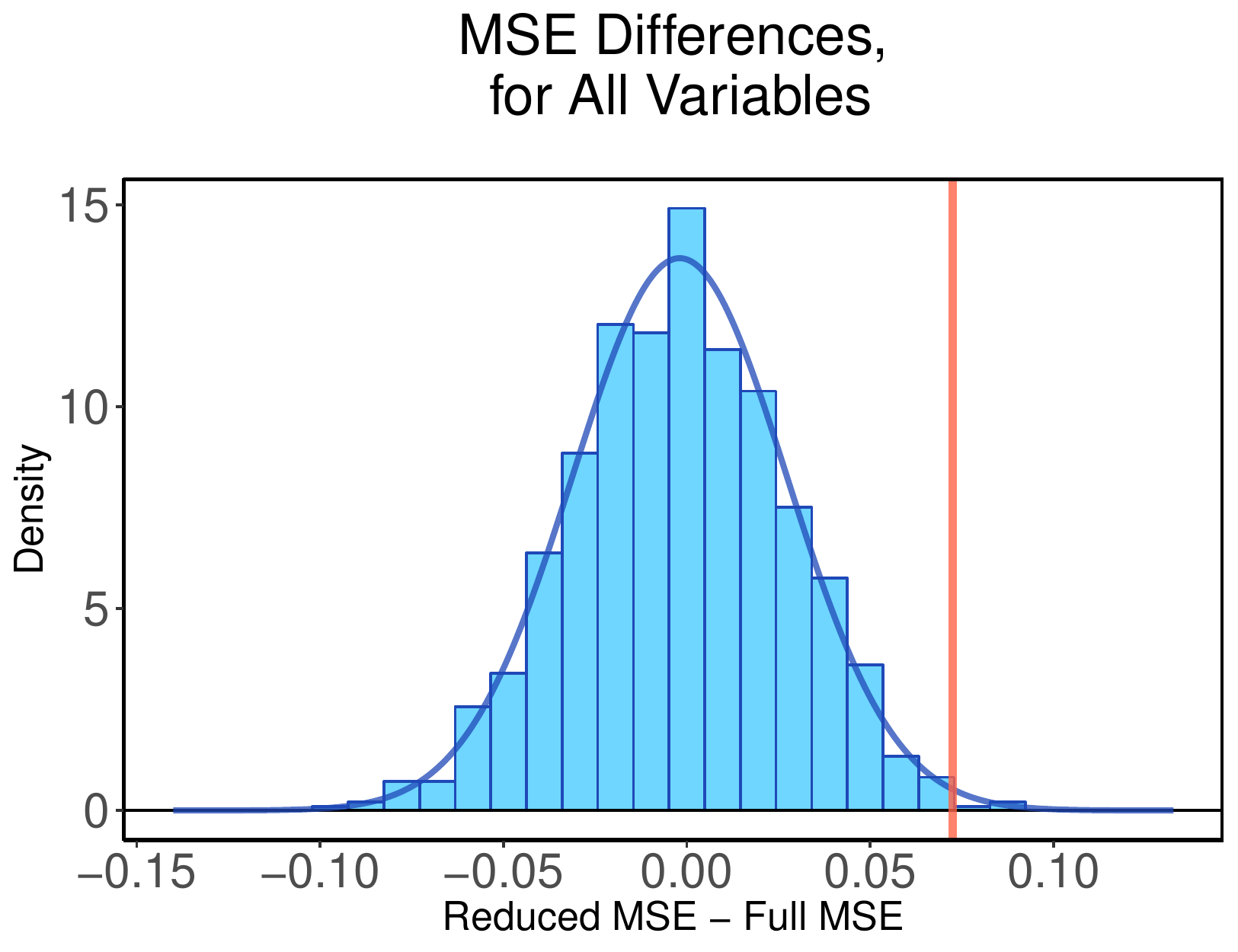}
	\caption{MSE Test for all covariates}
    \end{subfigure}
    \begin{subfigure}[b]{0.475\textwidth}
        \centering
	\includegraphics[width = \textwidth]{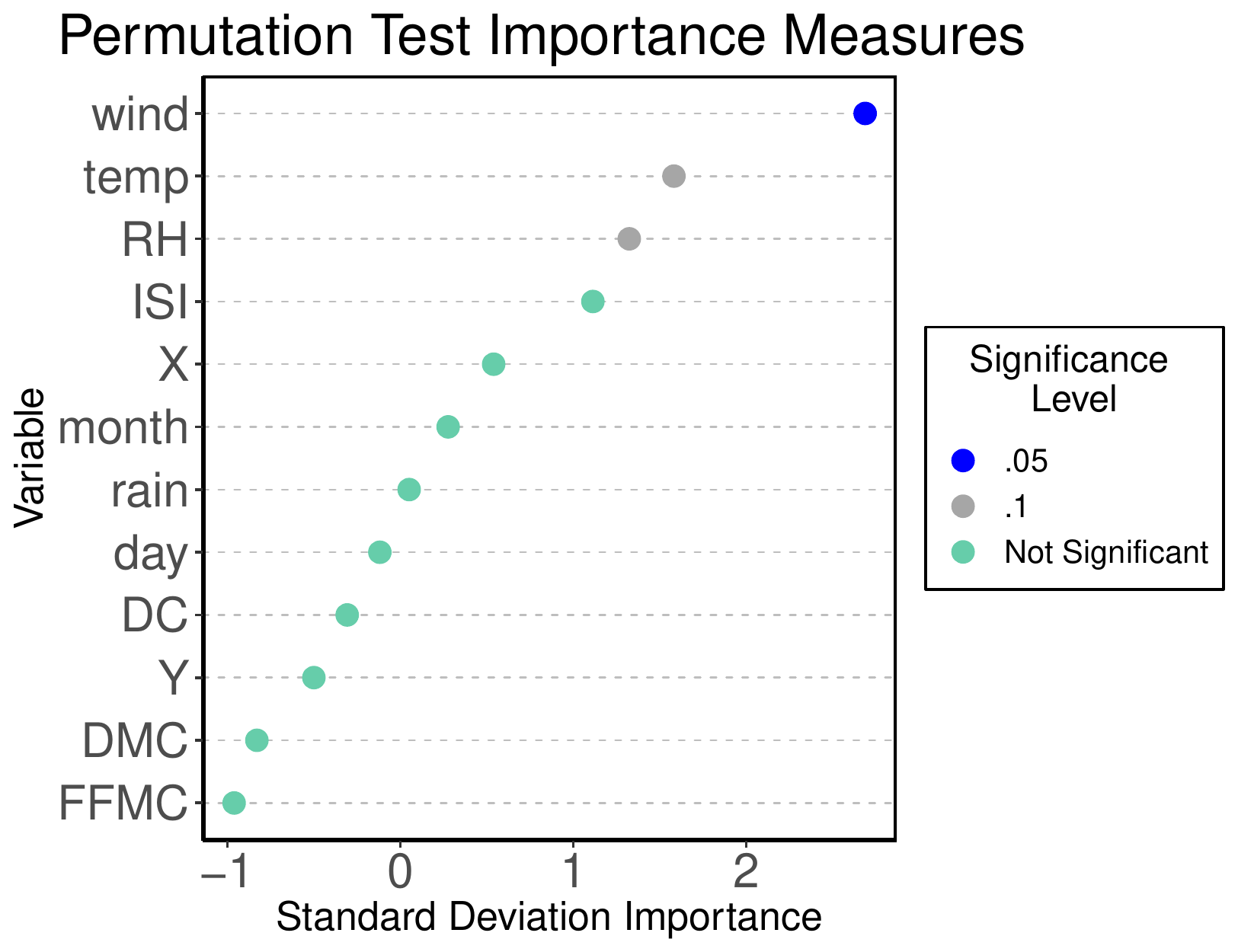}
	\caption{Marginal Importance Scores}
    \end{subfigure}
	\caption{Results on the forest fire data from \citet{Cortez2007}. Red line indicates observed value, and histograms of differences in MSE after permutation are overlayed by an estimated normal density.} 					\label{fig:ff_data}
\end{figure}

\section{Discussion} \label{sec:discus}
The work here presents a formal hypothesis testing framework for evaluating the predictive significance of covariates in a random forests model which, unlike existing approaches, is both computationally efficient and statistically valid, placing hypothesis tests with random forests firmly within the grasp of applied researchers.  Previously suggested parametric approaches are computationally prohibitive and place severe restrictions on where the hypotheses can be evaluated while the popular heuristic out-of-bag (oob) approaches are easily computed but also easily fooled by correlated and/or categorical covariates.  We note further that while the ensemble nature of random forests presents a natural context for such tests, much of the theoretical backing for this procedure is largely agnostic to the particular class of base-learner models being constructed.  

Besides its feasibility, this permutation approach also offers some flexibility in the kinds of problems open to investigation by practitioners.  Consider, for example, the mediator detection problem arising frequently in medical studies wherein a covariate $X_1$ is a \emph{mediator} for another covariate $X_2$ whenever the effect of $X_2$ on the response is nullified (or substantially lessened) by including $X_1$ in the model.  The same two-step process often employed with linear models can be carried out with random forests using the tests developed here:  first determine whether $X_2$ is significant without $X_1$ in the model, then test whether the significance of $X_2$ disappears whenever $X_1$ is included.  

One potential criticism of the approach presented here may be that it becomes more computationally burdensome whenever one wishes to evaluate the significance of all available covariates one at a time.  Note however that by construction, we need only build relatively few trees to conduct each test and thus in small or even moderate dimensions, simply repeating our permutation approach $p$ times is still more computationally efficient than carrying out even a single parametric test.Finally, we note that the validity of our approach was verified by arguing that the random forest trees behave like an iid sequence asymptotically. As argued in \autoref{subsec:permtests}, random forest predictions are often near an iid sum, so that the linearity condition of \citet{Chung2013} may be applicable in many wider cases.  

\bigskip
\begin{center}
{\large\bf SUPPLEMENTARY MATERIAL} 
\end{center}
R code for implementing the testing procedure as well as all simulation examples is provided here and publicly available.
 
\bigskip
\begin{center}
{\large\bf ACKNOWLEDGEMENTS}
\end{center}
This research was supported in part by the University of Pittsburgh Center for Research Computing through computing resources. We specifically acknowledge the assistance of Kim Wong.  LM was supported in part by NSF DMS-1712041. We thank the thousands of eBird participants for their contributions and support for eBird from The Wolf Creek Foundation and the National Science Foundation (ABI sustaining: DBI-1356308; computing support from CNS-1059284). We also thank Giles Hooker for useful feedback.

\bigskip

\bibliographystyle{Chicago}

\begin{thebibliography}{}

\bibitem[\protect\citeauthoryear{Aldous}{Aldous}{1985}]{Aldous1985}
Aldous, D.~J. (1985).
\newblock Exchangeability and related topics.
\newblock In {\em {\'E}cole d'{\'E}t{\'e} de Probabilit{\'e}s de Saint-Flour
  XIII—1983}, pp.\  1--198. Springer.

\bibitem[\protect\citeauthoryear{Altmann, Tolo{\c{s}}i, Sander, and
  Lengauer}{Altmann et~al.}{2010}]{Altmann2010}
Altmann, A., L.~Tolo{\c{s}}i, O.~Sander, and T.~Lengauer (2010).
\newblock Permutation importance: a corrected feature importance measure.
\newblock {\em Bioinformatics\/}~{\em 26\/}(10), 1340--1347.

\bibitem[\protect\citeauthoryear{Athey, Tibshirani, and Wager}{Athey
  et~al.}{2016}]{Athey2016}
Athey, S., J.~Tibshirani, and S.~Wager (2016).
\newblock Generalized random forests.
\newblock {\em arXiv preprint arXiv:1610.01271\/}.

\bibitem[\protect\citeauthoryear{Barber, Cand{\`e}s, et~al.}{Barber
  et~al.}{2015}]{Barber2015}
Barber, R.~F., E.~J. Cand{\`e}s, et~al. (2015).
\newblock Controlling the false discovery rate via knockoffs.
\newblock {\em The Annals of Statistics\/}~{\em 43\/}(5), 2055--2085.

\bibitem[\protect\citeauthoryear{Biau}{Biau}{2012}]{Biau2012}
Biau, G. (2012).
\newblock Analysis of a random forests model.
\newblock {\em Journal of Machine Learning Research\/}~{\em 13\/}(Apr),
  1063--1095.

\bibitem[\protect\citeauthoryear{Breiman}{Breiman}{2001}]{Breiman2001}
Breiman, L. (2001).
\newblock Random forests.
\newblock {\em Machine learning\/}~{\em 45\/}(1), 5--32.

\bibitem[\protect\citeauthoryear{Breiman, Friedman, Stone, and Olshen}{Breiman
  et~al.}{1984}]{Breiman1984}
Breiman, L., J.~Friedman, C.~J. Stone, and R.~A. Olshen (1984).
\newblock {\em Classification and regression trees}.
\newblock CRC press.

\bibitem[\protect\citeauthoryear{Candes, Fan, Janson, and Lv}{Candes
  et~al.}{2016}]{Candes2016}
Candes, E., Y.~Fan, L.~Janson, and J.~Lv (2016).
\newblock Panning for gold: Model-free knockoffs for high-dimensional
  controlled variable selection.
\newblock {\em arXiv preprint arXiv:1610.02351\/}.

\bibitem[\protect\citeauthoryear{Chung and Romano}{Chung and
  Romano}{2013}]{Chung2013}
Chung, E. and J.~P. Romano (2013).
\newblock Exact and asymptotically robust permutation tests.
\newblock {\em The Annals of Statistics\/}, 484--507.

\bibitem[\protect\citeauthoryear{Coleman, Mentch, Fink, La~Sorte, Hooker,
  Hochachka, and Winkler}{Coleman et~al.}{2017}]{Coleman2017}
Coleman, T., L.~Mentch, D.~Fink, F.~La~Sorte, G.~Hooker, W.~Hochachka, and
  D.~Winkler (2017).
\newblock Statistical inference on tree swallow migrations.
\newblock {\em arXiv preprint arXiv:1710.09793\/}.

\bibitem[\protect\citeauthoryear{Cortez and Morais}{Cortez and
  Morais}{2007}]{Cortez2007}
Cortez, P. and A.~d. J.~R. Morais (2007).
\newblock A data mining approach to predict forest fires using meteorological
  data.

\bibitem[\protect\citeauthoryear{Cui, Zhu, Zhou, and Kosorok}{Cui
  et~al.}{2017}]{Cui2017}
Cui, Y., R.~Zhu, M.~Zhou, and M.~Kosorok (2017).
\newblock Some asymptotic results of survival tree and forest models.
\newblock {\em arXiv preprint arXiv:1707.09631\/}.

\bibitem[\protect\citeauthoryear{Efron}{Efron}{2014}]{Efron2014}
Efron, B. (2014).
\newblock Estimation and accuracy after model selection.
\newblock {\em Journal of the American Statistical Association\/}~{\em
  109\/}(507), 991--1007.

\bibitem[\protect\citeauthoryear{Fern{\'a}ndez-Delgado, Cernadas, Barro, and
  Amorim}{Fern{\'a}ndez-Delgado et~al.}{2014}]{Fernandez2014}
Fern{\'a}ndez-Delgado, M., E.~Cernadas, S.~Barro, and D.~Amorim (2014).
\newblock Do we need hundreds of classifiers to solve real world classification
  problems.
\newblock {\em Journal of Machine Learning Research\/}~{\em 15\/}(1),
  3133--3181.

\bibitem[\protect\citeauthoryear{Fisher}{Fisher}{1937}]{Fisher1937}
Fisher, R.~A. (1937).
\newblock {\em The design of experiments}.
\newblock Oliver And Boyd; Edinburgh; London.

\bibitem[\protect\citeauthoryear{Friedman}{Friedman}{1991}]{Friedman1991}
Friedman, J.~H. (1991).
\newblock Multivariate adaptive regression splines.
\newblock {\em The annals of statistics\/}, 1--67.

\bibitem[\protect\citeauthoryear{Good}{Good}{2002}]{Good2002}
Good, P.~I. (2002).
\newblock Extensions of the concept of exchangeability and their applications.
\newblock {\em Journal of Modern Applied Statistical Methods\/}~{\em 1\/}(2),
  34.

\bibitem[\protect\citeauthoryear{Hoeffding}{Hoeffding}{1952}]{Hoeffding1952}
Hoeffding, W. (1952).
\newblock The large-sample power of tests based on permutations of
  observations.
\newblock {\em The Annals of Mathematical Statistics\/}, 169--192.

\bibitem[\protect\citeauthoryear{Ishwaran and Lu}{Ishwaran and
  Lu}{2008}]{Ishwaran2008}
Ishwaran, H. and M.~Lu (2008).
\newblock Random survival forests.
\newblock {\em Wiley StatsRef: Statistics Reference Online\/}, 1--13.

\bibitem[\protect\citeauthoryear{Ishwaran and Lu}{Ishwaran and
  Lu}{2019}]{Ishwaran2019}
Ishwaran, H. and M.~Lu (2019).
\newblock Standard errors and confidence intervals for variable importance in
  random forest regression, classification, and survival.
\newblock {\em Statistics in medicine\/}~{\em 38\/}(4), 558--582.

\bibitem[\protect\citeauthoryear{Janitza, Celik, and Boulesteix}{Janitza
  et~al.}{2016}]{Janitza2016}
Janitza, S., E.~Celik, and A.-L. Boulesteix (2016).
\newblock A computationally fast variable importance test for random forests
  for high-dimensional data.
\newblock {\em Advances in Data Analysis and Classification\/}, 1--31.

\bibitem[\protect\citeauthoryear{Janssen}{Janssen}{2005}]{Janssen2005}
Janssen, A. (2005).
\newblock Resampling student'st-type statistics.
\newblock {\em Annals of the Institute of Statistical Mathematics\/}~{\em
  57\/}(3), 507--529.

\bibitem[\protect\citeauthoryear{Lehmann and Romano}{Lehmann and
  Romano}{2006}]{Lehmann2006}
Lehmann, E.~L. and J.~P. Romano (2006).
\newblock {\em Testing statistical hypotheses}.
\newblock Springer Science \& Business Media.

\bibitem[\protect\citeauthoryear{Lehmann, Stein, et~al.}{Lehmann
  et~al.}{1949}]{Lehmann1949}
Lehmann, E.~L., C.~Stein, et~al. (1949).
\newblock On the theory of some non-parametric hypotheses.
\newblock {\em The Annals of Mathematical Statistics\/}~{\em 20\/}(1), 28--45.

\bibitem[\protect\citeauthoryear{Liaw and Wiener}{Liaw and
  Wiener}{2002}]{Liaw2002}
Liaw, A. and M.~Wiener (2002).
\newblock Classification and regression by randomforest.
\newblock {\em R News\/}~{\em 2\/}(3), 18--22.

\bibitem[\protect\citeauthoryear{Meinshausen}{Meinshausen}{2006}]{Meinshausen2006}
Meinshausen, N. (2006).
\newblock Quantile regression forests.
\newblock {\em Journal of Machine Learning Research\/}~{\em 7\/}(Jun),
  983--999.

\bibitem[\protect\citeauthoryear{Mentch and Hooker}{Mentch and
  Hooker}{2016}]{Mentch2016}
Mentch, L. and G.~Hooker (2016).
\newblock Quantifying uncertainty in random forests via confidence intervals
  and hypothesis tests.
\newblock {\em The Journal of Machine Learning Research\/}~{\em 17\/}(1),
  841--881.

\bibitem[\protect\citeauthoryear{Mentch and Hooker}{Mentch and
  Hooker}{2017}]{Mentch2017}
Mentch, L. and G.~Hooker (2017).
\newblock Formal hypothesis tests for additive structure in random forests.
\newblock {\em Journal of Computational and Graphical Statistics\/}, 1--9.

\bibitem[\protect\citeauthoryear{Neuhaus}{Neuhaus}{1993}]{Neuhaus1993}
Neuhaus, G. (1993).
\newblock Conditional rank tests for the two-sample problem under random
  censorship.
\newblock {\em The Annals of Statistics\/}, 1760--1779.

\bibitem[\protect\citeauthoryear{Oehlert}{Oehlert}{1992}]{Oehlert1992}
Oehlert, G.~W. (1992).
\newblock A note on the delta method.
\newblock {\em The American Statistician\/}~{\em 46\/}(1), 27--29.

\bibitem[\protect\citeauthoryear{Pesarin and Salmaso}{Pesarin and
  Salmaso}{2010}]{Pesarin2010}
Pesarin, F. and L.~Salmaso (2010).
\newblock {\em Permutation tests for complex data: theory, applications and
  software}.
\newblock John Wiley \& Sons.

\bibitem[\protect\citeauthoryear{Phipson and Smyth}{Phipson and
  Smyth}{2010}]{Phipson2010}
Phipson, B. and G.~K. Smyth (2010).
\newblock Permutation p-values should never be zero: calculating exact p-values
  when permutations are randomly drawn.
\newblock {\em Statistical applications in genetics and molecular
  biology\/}~{\em 9\/}(1).

\bibitem[\protect\citeauthoryear{Romano}{Romano}{1990}]{Romano1990}
Romano, J.~P. (1990).
\newblock On the behavior of randomization tests without a group invariance
  assumption.
\newblock {\em Journal of the American Statistical Association\/}~{\em
  85\/}(411), 686--692.

\bibitem[\protect\citeauthoryear{Scornet}{Scornet}{2016}]{Scornet2016}
Scornet, E. (2016).
\newblock On the asymptotics of random forests.
\newblock {\em Journal of Multivariate Analysis\/}~{\em 146}, 72--83.

\bibitem[\protect\citeauthoryear{Scornet, Biau, Vert, et~al.}{Scornet
  et~al.}{2015}]{Scornet2015}
Scornet, E., G.~Biau, J.-P. Vert, et~al. (2015).
\newblock {Consistency of random forests}.
\newblock {\em {The Annals of Statistics}\/}~{\em 43\/}(4), 1716--1741.

\bibitem[\protect\citeauthoryear{Strobl, Boulesteix, Zeileis, and
  Hothorn}{Strobl et~al.}{2007}]{Strobl2007}
Strobl, C., A.-L. Boulesteix, A.~Zeileis, and T.~Hothorn (2007).
\newblock Bias in random forest variable importance measures: Illustrations,
  sources and a solution.
\newblock {\em BMC bioinformatics\/}~{\em 8\/}(1), 25.

\bibitem[\protect\citeauthoryear{Sullivan, Aycrigg, Barry, Bonney, Bruns,
  Cooper, Damoulas, Dhondt, Dietterich, Farnsworth, et~al.}{Sullivan
  et~al.}{2014}]{Sullivan2014}
Sullivan, B.~L., J.~L. Aycrigg, J.~H. Barry, R.~E. Bonney, N.~Bruns, C.~B.
  Cooper, T.~Damoulas, A.~A. Dhondt, T.~Dietterich, A.~Farnsworth, et~al.
  (2014).
\newblock The ebird enterprise: an integrated approach to development and
  application of citizen science.
\newblock {\em Biological Conservation\/}~{\em 169}, 31--40.

\bibitem[\protect\citeauthoryear{Sullivan, Wood, Iliff, Bonney, Fink, and
  Kelling}{Sullivan et~al.}{2009}]{Sullivan2009}
Sullivan, B.~L., C.~L. Wood, M.~J. Iliff, R.~E. Bonney, D.~Fink, and S.~Kelling
  (2009).
\newblock ebird: A citizen-based bird observation network in the biological
  sciences.
\newblock {\em Biological Conservation\/}~{\em 142\/}(10), 2282--2292.

\bibitem[\protect\citeauthoryear{Tolo{\c{s}}i and Lengauer}{Tolo{\c{s}}i and
  Lengauer}{2011}]{Tolocsi2011}
Tolo{\c{s}}i, L. and T.~Lengauer (2011).
\newblock Classification with correlated features: unreliability of feature
  ranking and solutions.
\newblock {\em Bioinformatics\/}~{\em 27\/}(14), 1986--1994.

\bibitem[\protect\citeauthoryear{Wager and Athey}{Wager and
  Athey}{2018}]{Wager2018}
Wager, S. and S.~Athey (2018).
\newblock Estimation and inference of heterogeneous treatment effects using
  random forests.
\newblock {\em Journal of the American Statistical Association\/}~{\em
  113\/}(523), 1228--1242.

\bibitem[\protect\citeauthoryear{Wager, Hastie, and Efron}{Wager
  et~al.}{2014}]{Wager2014a}
Wager, S., T.~Hastie, and B.~Efron (2014).
\newblock Confidence intervals for random forests: the jackknife and the
  infinitesimal jackknife.
\newblock {\em Journal of Machine Learning Research\/}~{\em 15\/}(1),
  1625--1651.

\bibitem[\protect\citeauthoryear{Wright and Ziegler}{Wright and
  Ziegler}{2015}]{Wright2015}
Wright, M.~N. and A.~Ziegler (2015).
\newblock Ranger: a fast implementation of random forests for high dimensional
  data in c++ and r.
\newblock {\em arXiv preprint arXiv:1508.04409\/}.

\bibitem[\protect\citeauthoryear{Zhu, Zeng, and Kosorok}{Zhu
  et~al.}{2015}]{Zhu2015}
Zhu, R., D.~Zeng, and M.~R. Kosorok (2015).
\newblock Reinforcement learning trees.
\newblock {\em Journal of the American Statistical Association\/}~{\em
  110\/}(512), 1770--1784.

\end{thebibliography}

\appendix

\section{Proofs of Technical Results} \label{appdx:proofs}

We now provide the technical details and proofs for theoretical discussion in Section 3.  For completeness, theorems and lemmas are restated. 

\textbf{Theorem 1.} \textit{Under the exchangeability conditions outlined in Section 3.1, denote a sequence of (potentially randomized) trees trained on subsamples from $\df_n$ as $\{T_k(\cdot)\}_1^\infty$. Moreover, consider an independently drawn test point, $\bm{Z}^* = (\bm{X}^*, Y^*) \sim F$. Then, the residuals}
\[
r_k = T_k(\bm{X}^*) - Y^*
\]
\emph{form an infinitely exchangeable sequence of random variables.}

\begin{proof}
Let $\Xi$ be the distribution of randomization parameters, and let $\mathcal{S}_{k_n}(\df_n)$ be the distribution of subsamples of size $k_n$ drawn uniformly from the original data. Then, to construct a tree, we have the following procedure:
\begin{enumerate}
\item Draw $\df_{k_n}^* \sim \mathcal{S}_{k_n}(\df_n)$
\item Draw $\xi \sim \Xi$
\item Draw $\bm{Z}^* \sim F$
\item Construct a tree according to some combining function, say $\phi$ , of $\xi, \df_{k_n}^*$, i.e. $T = \phi(\xi, \df_{k_n}^*)$.
\end{enumerate}
Each draw is done independent of the other draws. Repeating (1) and (2) independently gives iid sequences $\{\df_{l, k_n}^*)\}_{l = 1}^\infty$ and $\{\xi_j\}_{j=1}^\infty$.  Then, the sequence
\[
T_1 = \phi(\xi_1, \df_{1, k_n}^*),\ T_2 =\phi(\xi_2, \df_{2, k_n}^*), ...
\]
is a mixture of iid sequences, where the mixture is directed (in the sense of \citet{Aldous1985})  by $\df_n$.  So, $\{T_l \ |\ \df_n\}$ is exactly an iid sequence of functions. Further,
$\{r_l \ |\ \df_n, \bm{Z}^*\}$ is an iid sequence of random variables, and thus the conclusion follows from the converse of DeFinetti's Theorem.
\end{proof}

See \citet{Aldous1985} page 29 for more details on this construction.

We turn now to Lemma 1 from Section 3.2, which establishes asymptotic pairwise independence.

\textbf{Lemma 1.} \textit{Consider a collection of $B_n$ trees  built from a training dataset of size $n$ on subsamples of size $k_n$, say $\{T_{j, k_n}\}_{j=1}^{B_n}$, satisfying \autoref{cond:cond1}. Then, as long as $k_n/\sqrt{n} \to 0$ and}
\[
 \binom{B_n}{2} \log\bigg[\frac{\binom{n- k_n}{k_n}}{\binom{n}{k_n}}\bigg] \to 0
\]
\emph{the infinite sample sequence of trees, $\{T_{1,\infty, k_\infty}, ..., T_{B,\infty, k_\infty},...\}$ is an infinite sequence of pairwise independent random functions.}
 
\begin{proof}
\autoref{cond:cond1} guarantees the existence of a limiting random variable. 

It is sufficient to show that asymptotically, the trees are trained using independent training samples, because we have assumed that our original data are iid. Define the indices of a subsample  in the following way:
\[\ind(\df^*_{k_n}) := \{j \in \{1,..., n\}: Z_j \in \df^*_{k_n} \} . \]
Then, by the assumption that the $Z_k$ are independent,
$$\df^*_{k_n,j}\ \indep\ \df^*_{k_n,l} \iff |\ind(\df^*_{k_n,j}) \cap \ind(\df^*_{k_n,l})| = 0$$
so, it is sufficient to show that
\[
\lim_{n\to\infty}P(|\ind(\df^*_{k_n,j}) \cap \ind(\df^*_{k_n,l})| = 0) = 1, \ \forall \ j \neq l .
\]
Note that if $k_n \geq n/2$, this event has probability 0, so choose $n$ so that $n > 2k_n$. Then
\begin{align*}
P(|\ind(\df^*_{k_n,j}) \cap \ind(\df^*_{k_n,l})| = 0) &= \frac{\binom{n- k_n}{k_n}}{\binom{n}{k_n}} \\
&= \frac{((n-k_n)!)^2}{n!(n-2k_n)!} \\
&= \frac{(n-k_n)!}{n!} \times \frac{(n-k_n)!}{(n-2k_n)!} \\
&= \frac{(n-k_n)(n-k_n-1)...(n-2k_n+1)}{n(n-1)...(n-k_n+1)}.
\end{align*}
There are $k_n$ terms in both the numerator and denominator here, so we can separate the product in the term above as
\begin{align*}
P(|\ind(\df^*_{k_n,j}) \cap \ind(\df^*_{k_n,l})| = 0) &= \frac{n-k_n}{n} \times \frac{n-k_n-1}{n-1}\times ... \times \frac{n-2k_n+1}{n-k_n+1} . \\
&\geq \left(\frac{n-2k_n+1}{n}\right)^{k_n} \\
&= \left( 1 - \frac{2k_n + 1}{n}\right)^{k_n} \\
&= \exp\left[k_n \log\left(1 - \frac{2k_n + 1}{n}\right) \right] \\
&\approx \exp\left[k_n\left(-\frac{2k_n + 1}{n}\right) - \frac{k_n}{2}\left(\frac{2k_n + 1}{n} \right)^2 \right] \\
&\approx \exp\left[-\frac{2k_n^2 + k_n}{n}\right] \\
&\approx 1 
\end{align*}
where $a_n \approx b_n$ means that $\lim_{n\to\infty} a_n/b_n = 1$, and we have used the Taylor expansion of $\log(1-x)$ in the above.

This means that two pre-specified subsamples will be independent in the limit. Next, we need to ensure that this holds for all subsamples, i.e. 
\[
P\bigg( \bigcap_{j \neq l}\{ |\ind(\df^*_{k_n,j}) \cap \ind(\df^*_{k_n,l})| = 0\}\bigg) \to 1 .
\]

For $B_n$ trees, there are $\binom{B_n}{2}$ subsample pairings, each drawn independently. Thus
\begin{align*}
P\bigg( \bigcap_{j \neq l} \{ |\ind(\df^*_{k_n,j}) \cap \ind(\df^*_{k_n,l})| = 0\}\bigg) &= \prod_{j\neq l} P( |\ind(\df^*_{k_n,j}) \cap \ind(\df^*_{k_n,l})| = 0)  \\
&= \bigg(\frac{\binom{n- k_n}{k_n}}{\binom{n}{k_n}}\bigg)^{\binom{B_n}{2}}.
\end{align*}
Next, by assumption,
\[
 \log P\bigg( \bigcap_{j \neq l} \{ |\ind(\df^*_{k_n,j}) \cap \ind(\df^*_{k_n,l})| = 0\}\bigg) = \binom{B_n}{2} \log\bigg[\frac{\binom{n- k_n}{k_n}}{\binom{n}{k_n}}\bigg]  \to 0
\]

so that the probability of this event goes to 1.
\end{proof}

After \autoref{lem:lemma1}, we next need to prove \autoref{lem:centered}, whose purpose is to show that the observed MSE is asymptotically centered around its own expectation.
 
 \textbf{Lemma 3}
\textit{Assume the conditions needed from \autoref{cor:CLT}. Additionally, assume that $g$ has at least $k$ derivatives for some $k \geq 3$ , and that $g^{(k)}(x) < \infty$ for all $x$. Further, assume that $\E | T_i(\bm{x})|^k <\infty$. Then, }
\[
\sqrt{B}\left[\E g(RF_B(\bm{x}) - g(\E RF_B(\bm{x})) \right]= \frac{g''(\E RF_B(\bm{x}))\sigma^2}{2\sqrt{B}} + o(B^{-3/2}).
\]
 \begin{proof} We rely on a result presented in \citet{Oehlert1992}, which states that under the conditions presented in the lemma statement, 
\begin{equation} \label{eqn:Oehlert}
\E g(RF_B(\bm{x})) = g(\E RF_B(\bm{x}))  + \frac{g''(\E RF_B(\bm{x})) \sigma^2}{2B} + o(B^{-2}).
\end{equation}
Thus, the result follows from multiplying both sides of \autoref{eqn:Oehlert} by $\sqrt{B}$ and rearranging terms.
 \end{proof}

Next, we move on to the proof of \autoref{prop:prop1} from \autoref{subsec:treespec}, which gives that the trees typically utilized in a random forest obey the necessary regularity conditions for \autoref{cor:CLT}.

\textbf{Proposition 1.} 
\textit{Assume that $Y = m(\bm{X}) + \epsilon$, where $m(\cdot)$ is continuous on the unit cube. Let $\mathcal{X} = [0,1]^p$, and assume that $X_{i,j} \iid Unif(0,1)$ for $i = 1,..., n$ and $j = 1,...,p$. Then, let $T_n(\bm{x})$ be a tree trained on iid pairs $(\bm{X}_1, Y_1),..., (\bm{X}_n, Y_n)$ such that each leaf of the tree contains a single observation. Further, assume the trees satisfy the following two conditions:}

\begin{enumerate}[(i)]
\item \textit{$\exists \gamma > 0$ such that $P(\text{variable $j$ is split on}) > \gamma$ for $j \in \{1,...,p\}$}
\item \textit{Each split leaves at least $\gamma n$ observations in each node.}
\end{enumerate}

\textit{Then, for each $\bm{x}  \in \mathcal{X}$
}\[
T_n(\bm{x}) \stackrel{d}{\to} Y|\bm{X} = \bm{x} \ \text{as} \ n \to \infty
\]

\begin{proof}
Each tree divides $\mathcal{X}$ into a partition of rectangular subspaces, corresponding to leaves of the tree. Following \citet{Meinshausen2006}, for each point $\bm{x}$ (with coordinates $ [x_1,..., x_p]$), let $\ell(\bm{x})$ denote the unique leaf of the tree containing $\bm{x}$. Let $R_\ell(\bm{x})$ be the rectangular subspace of $[0,1]^p$ corresponding to a particular leaf $\ell(\bm{x})$. The rectangular nature of the subspaces means that for each input feature, $R_\ell$ can be expressed as
\[
R_\ell(\bm{x}) = \bigotimes_{i = 1}^p [a(\bm{x}, i), b(\bm{x}, i)]
\]
where $0 \leq a(\bm{x}, i) \leq x_i \leq b(\bm{x}, i) \leq 1$ are scalars inducing an interval in dimension $i$. Then, the tree (by the existence of the requisite $\gamma$) satisfies the conditions of Lemma 2 in \citet{Meinshausen2006}, so that $\max_i |a(\bm{x}, i) - b(\bm{x}, i)| \stackrel{p}{\to} 0$. Let $\bm{a}(\bm{x}) = [a(\bm{x}, 1),..., a(\bm{x},p)]$ and similarly define $\bm{b}(\bm{x})$, so that the previous sentence implies: $\bm{a}(\bm{x}) \stackrel{p}{\to} \bm{b}(\bm{x})$. We therefore also see that $a(\bm{x}, i), b(\bm{x} ,i) \stackrel{p}{\to} x_i$ for all $i$. 

The trees are fully grown, so the tree prediction at the point $\bm{x}$ is given by
\[
T_n(\bm{x}) = \sum_{k=1}^n I(\bm{X}_k \in R_\ell(\bm{x}))Y_k
\]
i.e. the response for the observation whose leaf contains $\bm{x}$. As such, let $k^*$ be the index corresponding to the observation who shares a leaf with $\bm{x}$, so that $T_n(\bm{x}) = Y_{k^*}$. We can deconstruct the event $\bm{X}_{k^*} \in R_\ell(\bm{x})$ as
\begin{align*}
\{\bm{X}_{k^*} \in R_\ell(\bm{x}) \} &= \bigcap_{i=1}^p \{ a(\bm{x}, i) \leq X_{i, k^*} \leq b(\bm{x}, i) \} .
\end{align*}
Thus, in the limit,  $a(\bm{x}, i), b(\bm{x} ,i) \stackrel{p}{\to} X_{i,k^*}$, and so $X_{i,k^*}  \stackrel{p}{\to} x_i$ for all $i$. Further, continuity of $m$ yields that $m(\bm{X}_{k^*})  \stackrel{p}{\to} m(\bm{x})$. Thus, we see that, in the limit
\[
Y_{k^*} = m(\bm{x}) + \epsilon_{k^*} \stackrel{d}{=} m(\bm{x}) + \epsilon \stackrel{d}{=} Y| \bm{X} = \bm{x}
\]
because $\epsilon_{k^*}$ is independent of the location of $\bm{X}$.
\end{proof}

\section{Additional Simulations} \label{appdx:more_sims}
We include some additional simulations here to demonstrate the following points.
\begin{enumerate}
\item The accuracy of the permutation distribution approximation of the Gaussian.  The theory outlined in \autoref{sec:theory} establishes that the difference in MSEs between forests is asymptotically Gaussian but the difficulty in estimating the resulting variance largely restricts its direct usage in practical settings.  We go on to demonstrate that the permutation distribution approaches this distribution, thereby circumventing the need for a direct variance estimate.  The simulations below present empirical evidence that this approximation is reasonable in practical settings.
\item The instability of the variance estimation procedures laid out in \citet{Wager2014a} and \citet{Mentch2016}. Clearly variance estimation is useful for developing confidence intervals about random forest predictions, which in the case of pointwise consistency (as in the honest trees proposed by \citet{Wager2018}), are also valid for the underlying regression function. However, in the hypothesis testing framework, these estimates are useful only insofar as they allow for calculation of a test statistic. These variance estimates, such as the infinitesmal jackknife of \citet{Wager2014a}, recommend building $B = \mathcal{O}(n^{\beta})$ trees where $\beta \geq 1$. We demonstrate that this recommendation cannot be violated.
\item The robustness (and potential weaknesses) of the proposed procedure to different random forest implementations. In particular, we want to study the effect of larger subsamples/more trees. The theoretical results presented in \autoref{sec:theory} rely on treating the tree predictions as iid. Clearly, this is never true in practice, and some theoretical justification for the effects of this being small were presented in \autoref{sec:discus}. 
\end{enumerate}

\subsection{Normality of Permutation Distributions}
Here we provide a concise simulation demonstrating the accuracy of the permutation distribution approximation of the Gaussian in a practical setting.  We simulate $n = 2000$ training observations from Model 2 with covariate structure M1 as described in \autoref{sec:sims}.  Specifically, we consider the model $ Y =  \beta\sin(\pi I(X_7 = 2)X_1) + 2\beta(X_3-.05)^2 + \beta X_4 +   \beta X_2 + \epsilon$ where we sample covariates according to $X_1,..., X_5 \iid Unif(0,1)$ and $X_6,..., X_{10} \iid \text{Multionimial}(1, [\frac{1}{3}, \frac{1}{3}, \frac{1}{3}]^T)$.  Here we use $\beta = 10, \sigma = 10$, along with $N_t = 100$ test observations and apply our procedure to test for the significance of $X_3$ (important) and $X_5$ (unimportant). The random forests each consist of $B = 200$ trees trained on subsamples of size $k_n = n^{0.6}$, with $\texttt{mtry} = 3$. The resulting permutation distributions are shown in \autoref{fig:perm_norm}.

\begin{figure}[H]
    \centering
    \begin{subfigure}[b]{0.475\textwidth}
        \centering
	\includegraphics[width = \textwidth]{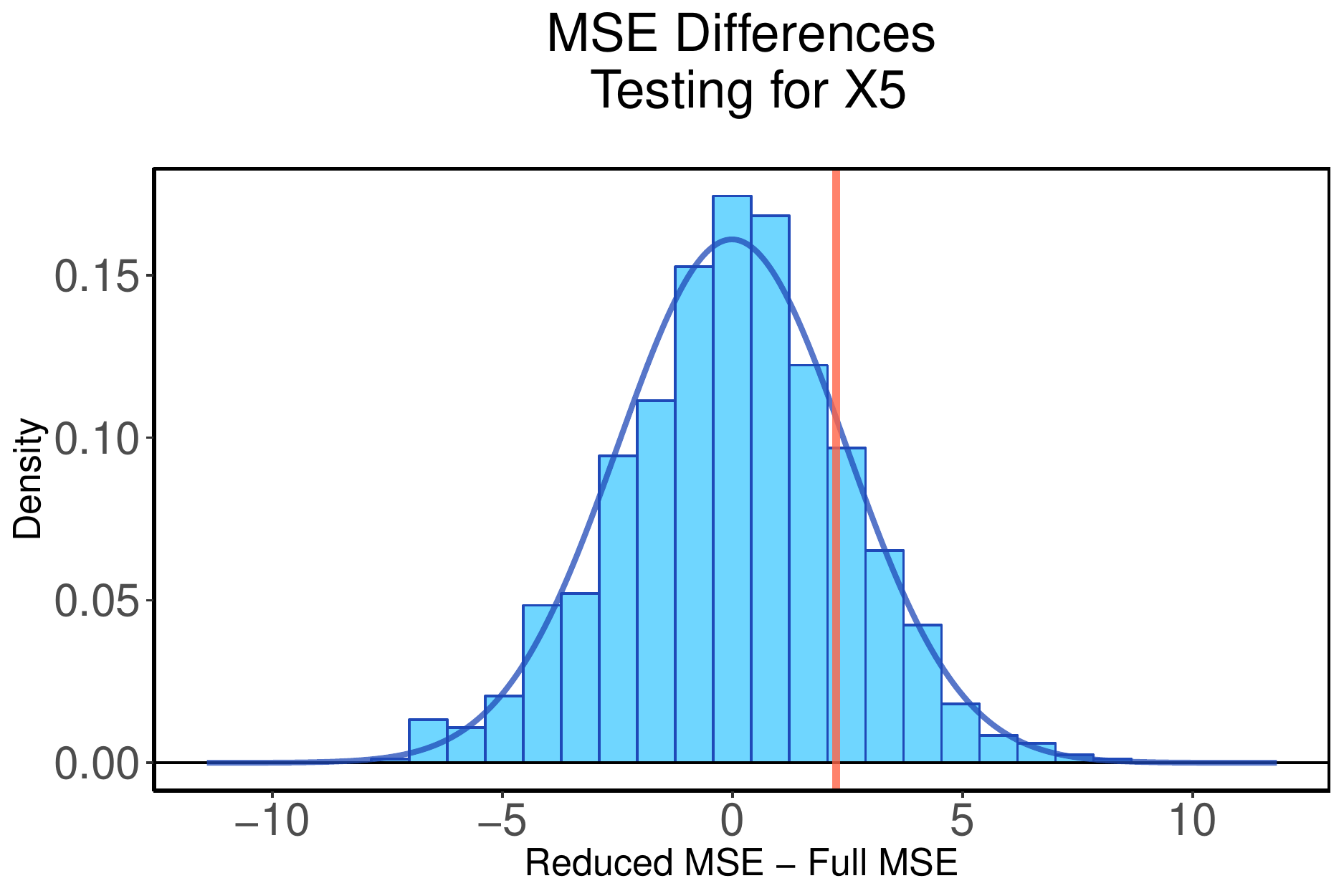}
	\caption{Under $H_0$}
    \end{subfigure}
    \begin{subfigure}[b]{0.475\textwidth}
        \centering
	\includegraphics[width = \textwidth]{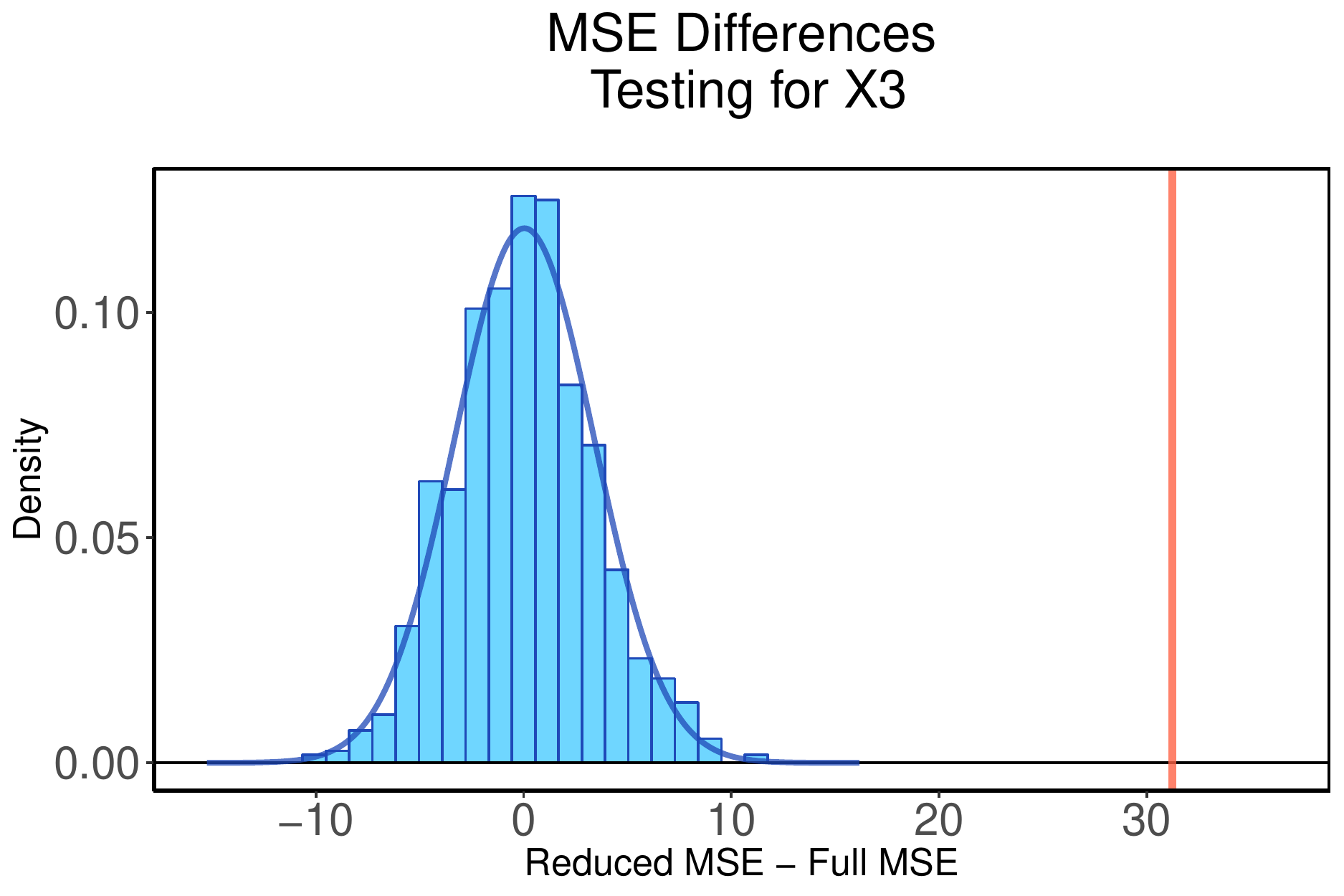}
		\caption{Under $H_1$}
    \end{subfigure}
	\caption{Permutation distributions of $\Delta_B$. Red line indicates observed value, and histograms are overlayed by an estimated normal density.} 					\label{fig:perm_norm}
\end{figure}

These plots demonstrate that the permutation distributions do approximate a Gaussian distribution. Moreover, in the null case, the observed $\Delta_B$ lies squarely in the center of the distribution, while in the alternative case, $\Delta_B$ lies far away from the center. Next, we more formally investigate the power/validity of the testing procedure.

\subsection{Variance Estimation Instability}
Here, we use the infinitesmal jackknife (IJ), as implemented in the \texttt{ranger} package \citep{Wright2015}, to estimate the variance of a random forest prediction at a given point. In particular, we simulate data from Model 2 from \autoref{tab:ydistn}, train a subsampled random forest, and record the IJ variance estimate of random forest prediction at $X_1 = ... = X_5 = 0.5$ and $X_6 = ... = X_{10} = 2$. We use $n = 2000$, $k_n = n^{0.5} \approx 44$, and vary the number of trees. Often times, the IJ variance estimate is negative, leading to a \texttt{NaN} output from the IJ software. These instances represent a case when the IJ estimate is useless to a practitioner, and as such, we report the percentage of times that a \texttt{NaN} output is returned for each number of trees. For each number of trees, we repeat the simulation 100 times, and results are shown in \autoref{fig:ranger_var}.

\begin{figure}[H]
\centering
\includegraphics[scale=.5]{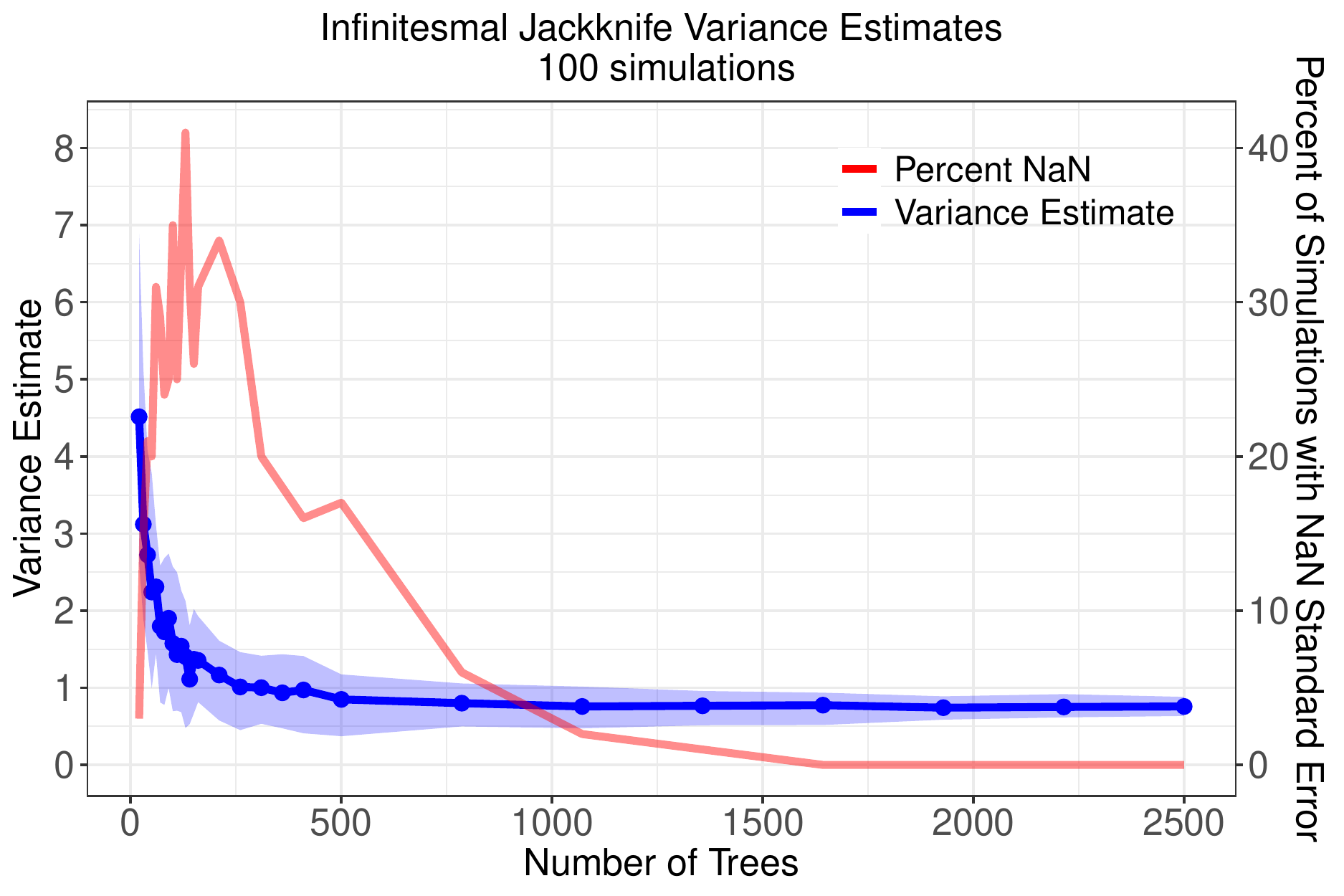}
\caption{\texttt{ranger} IJ variance estimate. Blue ribbon plot indicates central 90\% of variance estimates (corresponds to left axis), and red line (corresponds to right axis) represents percentage of runs that return \texttt{NaN}.} \label{fig:ranger_var}
\end{figure}

The IJ estimate provides overwhelmingly large variance estimates for small numbers of trees, leading to overly conservative confidence intervals and tests with exceptionally low power. Moreover, the ribbon remains quite wide until around $B = 2000$ trees, suggesting that at least $\mathcal{O}(n)$ trees are necessary to attain a stable variance estimate. A similar number of trees is necessary to ensure that a \texttt{NaN} is never returned. We should note that this is the simplest possible case of variance estimation, i.e. the estimation is only at a single point. The problem grows exponentially more complex as more test points are considered and covariance estimates are needed.  \citet{Mentch2016} note that the procedure is infeasible for more than 20-30 test points.  The authors demonstrate in follow-up work \citep{Mentch2017} that an approximate test can be produced by utilizing random projections which allows for slightly larger test sets but at the cost added computational strain.  In contrast, besides the minimal overhead required to form the additional predictions, the testing procedure proposed here is almost entirely immune to the number of points in the test set.  Once the initial predictions are formed, the only remaining work is to shuffle predictions (trees) and re-compute the difference in MSE between forests.

\subsection{Test Robustness}
We now present more figures similar to the power curves presented in \autoref{sec:sims}. The goal here is to present the proposed procedure's robustness to the number of trees $B$ and the subsample size $k_n$. To do so, we modify the simulation study plotted in the second panel of \autoref{fig:simresults}. Here, we fix the error variance at $\sigma^2(\epsilon) = 16$, and again simulate $n = 2000$ training observations and $N_t = 100$ test observations. First, we vary the number of trees built, according to
\[
B \in \big\{20, 50, 75, 125, 250, 375, 500, 750, 1000\big\}
\]
and let $k_n = n^{0.6}$. The resulting simulations are plotted in \autoref{fig:M2_by_B}.
\begin{figure}[H]
\centering
\includegraphics[scale=.6]{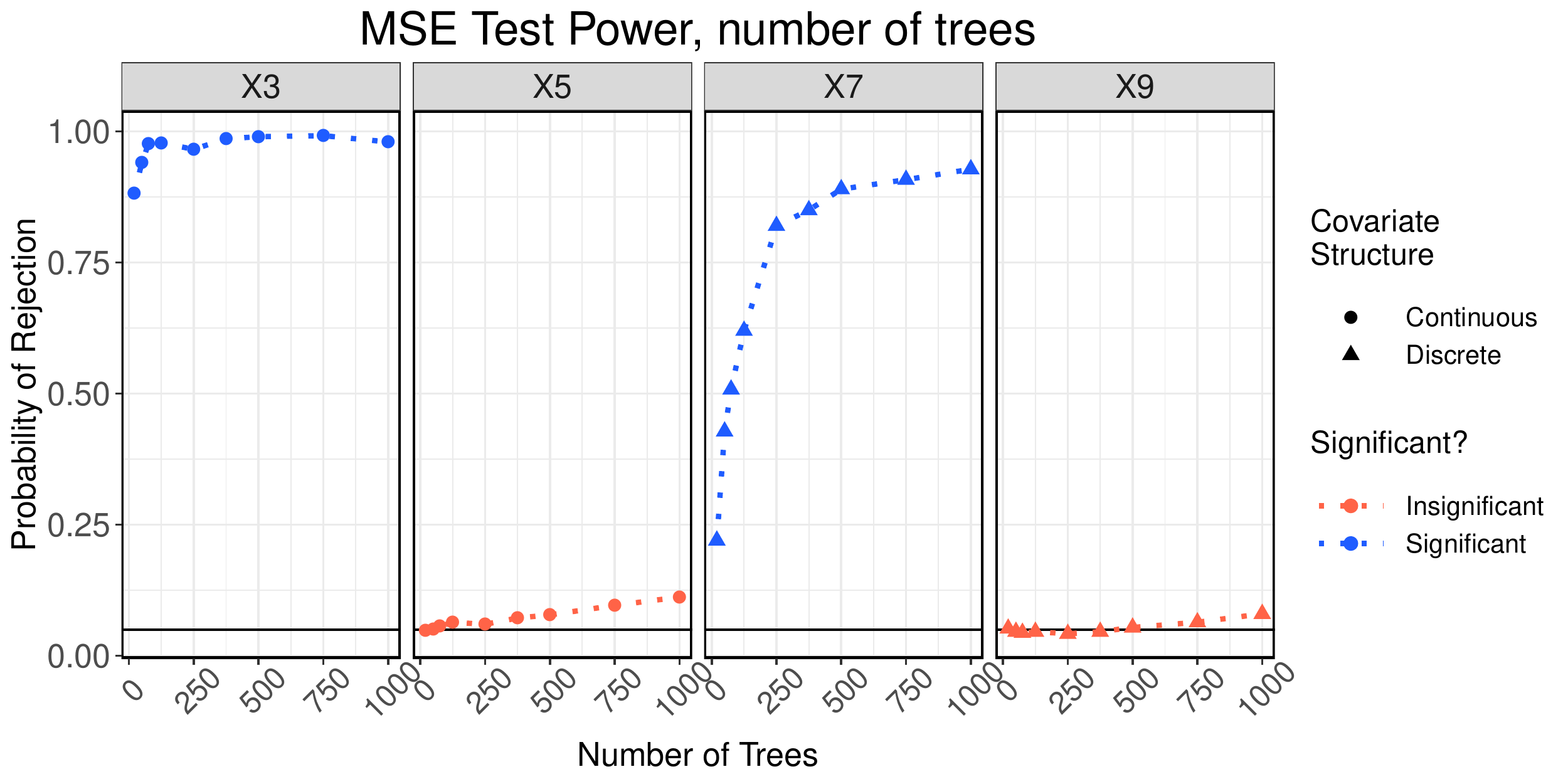}
\caption{Model 2 power curves for 500 simulations, by number of trees. The Y-axis represents $P(\tilde{p} \leq \alpha)$ where $\alpha = 0.05$ and is shown as the horizontal line across the bottom of the plots. } \label{fig:M2_by_B}
\end{figure}

Two clear patterns are clear in the figure - the power and type I error rate of the test both increase as the number of trees grows. However, the rate of growth for each of them is markedly different - the test attains high power around $B \approx 250$ trees, but deviations from the nominal level are only noticeable around $B \approx 500$ trees. Even when $B=1000$, the observed level is still within nearly 5\% of the baseline. Thus, while the level of the test may be slightly inflated for large numbers of trees, the procedure remains valid for limited, but realistic tree sizes.

Recall that the subsample size is a key limiting factor of \autoref{lem:lemma1} - it is required that $k_n = o(\sqrt{n})$ - to establish asymptotic normality. Other work \citep{Wager2018} weakens these conditions, but places explicit restrictions on the types of trees allowed in the ensemble.  We now examine the behavior of our procedure under larger sample sizes. We use the same simulation parameters as in \autoref{fig:M2_by_B}, but now fix $B = 125$ and let $k_n = n^p$, and we vary $p$ at 10 equally spaced values between 0.1 and 0.99.

The resulting simulation is shown in \autoref{fig:M2_by_p}. We see that for $p \leq 0.75$, the Type I error rate is maintained, but for larger subsamples, we begin to see a severe deviation.  Though severe, this is not necessarily surprising as such large subsampling rates correspond directly to a more severe violation of the iid approximation.

\begin{figure}[H]
\centering
\includegraphics[scale=.6]{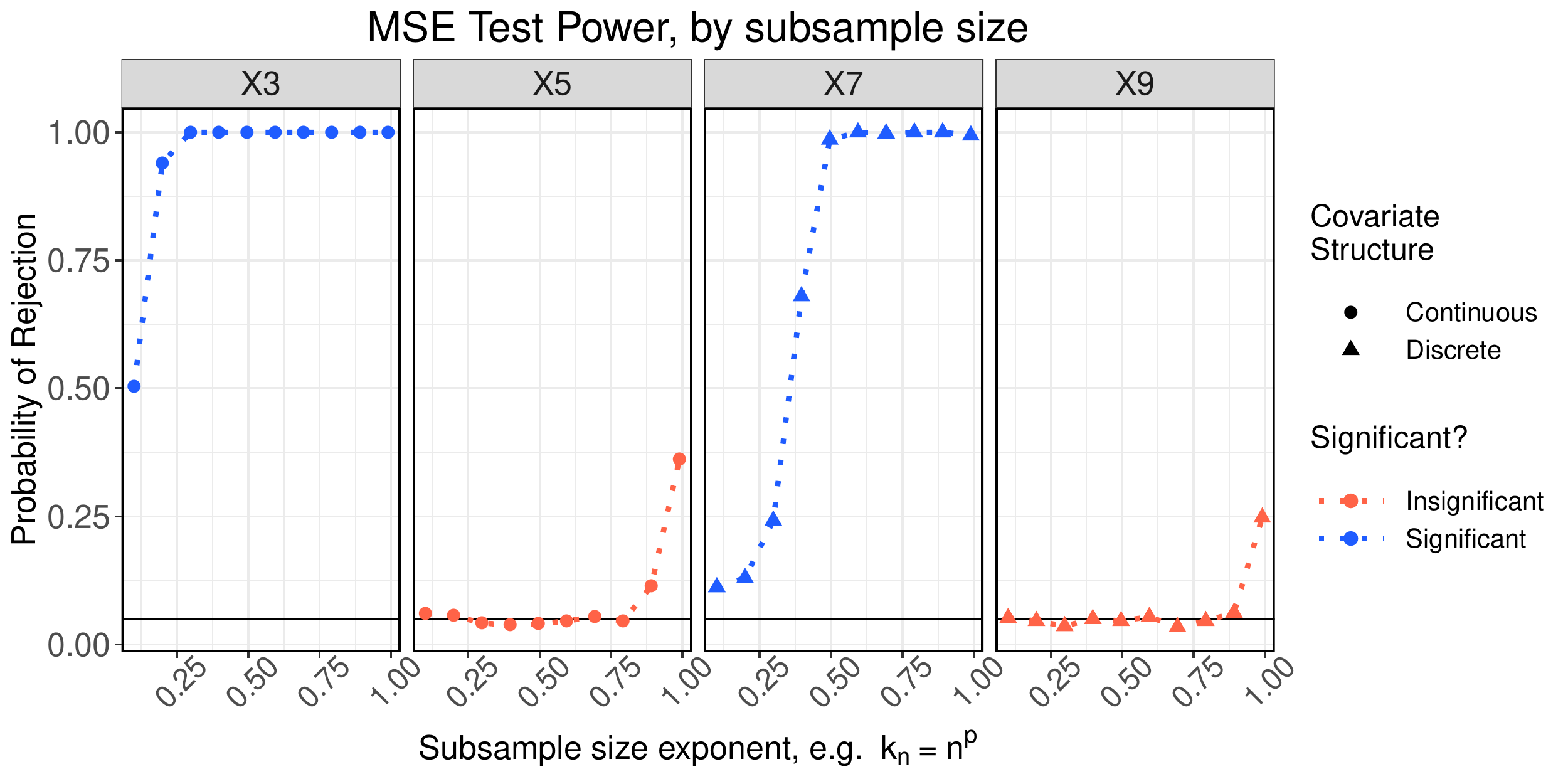}
\caption{Model 2 power curves for 500 simulations, by subsample exponent. The Y-axis represents $P(\tilde{p} \leq \alpha)$ where $\alpha = 0.05$ and is shown as the horizontal line across the bottom of the plots. } \label{fig:M2_by_p}
\end{figure}

\end{document}